\newcommand{\reg}[1]{{\color{gray}#1}}
\newcommand{\initial}{\textsf{initial}}
\newcommand{\unary}{\textsf{unary}}
\newcommand{\setunsat}{\textsf{set-unsat}}
\newcommand{\calC}{\mathcal{C}}
\def\01{\{0,1\}}
\newcommand{\stepsrandomwalk}{\frac{16d^2}{\delta^2}\ln\frac{2d|V|}{\delta}} 
\newcommand{\concat}{\,||\,}
\newcommand{\NOT}{\ensuremath{\mathrm{NOT}\xspace}} %
\newcommand{\CNOT}{\ensuremath{\mathrm{CNOT}\xspace}} %
\newcommand{\CCNOT}{\ensuremath{\mathrm{CCNOT}\xspace}} %
\newcommand{\OR}{\ensuremath{\mathrm{OR}\xspace}} %
\newcommand{\AND}{\ensuremath{\mathrm{AND}\xspace}} %
\newcommand{\problem}[1]{\ensuremath{\mathsf{#1}\xspace}} %
  \theoremstyle{plain}
  \newtheorem{theorem}{Theorem}[section]
  \newtheorem{lemma}[theorem]{Lemma}
  \newtheorem{corollary}[theorem]{Corollary}
  \theoremstyle{definition}
  \newtheorem{definition}[theorem]{Definition}
    \newtheorem{remark}[theorem]{Remark}
\newcommand{\comment}[1]{}
\newcommand{\defeq}{\stackrel{\mathrm{def}}{=}}
\newcommand{\poly}[1]{\mathrm{poly( #1 )}}
\newcommand{\eps}{\varepsilon}
\newcommand{\class}[1]{\textup{#1}}
\newcommand{\QMA}{\class{QMA}}   
\newcommand{\MA}{\class{MA}}
\newcommand{\NP}{\class{NP}}   
\newcommand{\PSPACE}{\class{PSPACE}}   
\newcommand{\PPSPACE}{\class{PPSPACE}}
\newcommand{\SetCSP}{\problem{SetCSP}}
\newcommand{\NameSuccCC}{Approximately-clean approximate-connected-component}
\newcommand{\nameSuccCC}{approximately-clean approximate-connected-component}
\newcommand{\NameBaseSuccCC}{Clean connected component}
\newcommand{\nameBaseSuccCC}{clean connected component}
\newcommand{\CleanCC}{\ensuremath{\problem{CCC}}}
\newcommand{\FixedCleanCC}{\ensuremath{\problem{FCCC}}}
\newcommand{\ACleanCC}{\ensuremath{\problem{ACAC}}}
\newcommand{\ACleanCCeps}{\ensuremath{\problem{ACAC}_{\eps}}}
\newcommand{\kSetCSP}{$k$\text{-}\problem{SetCSP}}
\newcommand{\CSP}{\problem{CSP}}
\newcommand{\kCSP}{$k$\textsf{-}\problem{CSP}}
\newcommand\defclass[5]{
\begin{definition}[#1]\label{#2}
#3 \\
\;\;\textbf{Completeness.} #4 \\
\quad\quad\textbf{Soundness.} #5
\end{definition} 
}
\newcommand\defproblem[5]{
\begin{definition}[#1]\label{#2}
#3 \\
\hspace{1cm} \textbf{Yes.} #4 \\
\noindent\;\;\textbf{No.} #5
\end{definition} 
}
 \newcommand{\ayes}{A_{\rm yes}} 
  \newcommand{\ano}{A_{\rm no}} 
\begin{document}
\pagenumbering{gobble}
\title{Two combinatorial MA-complete problems}

\author[1]{Dorit Aharonov}
\author[2]{Alex B. Grilo}
\affil[1]{Hebrew University of Jerusalem, Jerusalem, Israel}
\affil[2]{Sorbonne Universit\'{e}, CNRS, LIP6, France}

\date{}
\maketitle
\begin{abstract}
    Despite the interest in the complexity class \MA{}, the randomized analog of
  \NP{}, there are just a few known natural (promise-)\MA{}-complete
  problems. The first such problem was found by Bravyi and Terhal (SIAM Journal of Computing
  2009); this result was then followed by Crosson, Bacon and Brown (PRE 2010) and then by Bravyi (Quantum Information and Computation 2015).
Surprisingly, each of these problems is either from or inspired by quantum computation.
This fact makes it hard for 
   {\em classical}  complexity
  theorists to study these problems, and prevents potential progress, e.g., on the important question of derandomizing MA. 
  
 In this note we define two new natural combinatorial
  problems and we prove their \MA{}-completeness.
  The first problem, that we call \nameSuccCC{} ($\ACleanCC$), gets as input a succinctly described graph, some of whose vertices are marked. The problem is to decide whether there is a connected component whose vertices are {\it all unmarked}, or the graph is {\em far} from having this property.
The second problem, called \SetCSP{}, generalizes in a novel way 
the standard constraint satisfaction problem (\CSP{}) into constraints involving {\it sets} of strings.

  Technically, our proof that \SetCSP{} is \MA{}-complete is a fleshing out of an observation made in (Aharonov and Grilo, FOCS 2019), where it was noted that a restricted case of Bravyi and Terhal's MA complete problem (namely, 
  the {\it uniform} case) is 
  already \MA{} complete; and, moreover, that  
  this restricted case can be stated using classical, combinatorial language. 
  The fact that the first, arguably more natural, problem of $\ACleanCC$ 
  is \MA{}-hard follows quite naturally from this proof as well; 
  while containment of $\ACleanCC$ in \MA{} is simple, based on the theory of random walks. 
 
 We notice that this work, along with a translation of  the main 
  result of Aharonov and Grilo to the \SetCSP{} problem, 
  implies that finding a gap-amplification procedure for 
  \SetCSP{} (in the spirit of 
  the gap-amplification procedure introduced in Dinur's PCP proof) 
  would imply \MA{}=\NP{}. In fact, the problem of finding  gap-amplification for \SetCSP{} is 
  {\it equivalent} to the \MA{}=\NP{} problem. This provides an alternative new path towards the major problem of derandomizing \MA{}. 
    Deriving a similar statement regarding gap amplification of a natural restriction of \ACleanCC{} remains an open 
  question.
\end{abstract}
\clearpage
\pagenumbering{arabic}

\section{Introduction}
The complexity class \MA{} is a natural extension of \NP{} proof systems to the
probabilistic setting~\cite{Babai85}. There is a lot of evidence towards the fact that these
two complexity classes are
equal~\cite{ImpagliazzoKW02,Goldreich2011,HastadILL99,BabaiFNW93,NisanW94,ImpagliazzoW97,SudanTV01,KabanetsI2004}, however the proof remains elusive. 
It is even open to show that every problem in \MA{} can be solved in non-deterministic {\it sub-exponential} time. 

Surprisingly, the very first {\em natural} \MA{}-complete\footnote{
For PromiseMA, it is folklore that one can define complete problems by extending NP-complete problems (see, e.g. \cite{WilliansShor19}):
we define an exponential family of 3SAT formulas (given as input succinctly) and 
we have to decide if there is an
assignment that satisfies all of the formulas, or for every assignment, a random
formula in the family will not be satisfied with good probability.} problem, found by 
Bravyi and Terhal~\cite{BravyiT09} only close to $25$ years after the definition of the class (!) is defined using {\it quantum} terminology. 
But why would randomized NP have anything to do with quantum? Bravyi and Terhal show that deciding
if a given stoquastic $k$-Local Hamiltonian\footnote{Stoquastic Hamiltonians sit
between classical Hamiltonians (\CSP{}s) and general quantum Hamiltonians.} is frustration-free or at least inverse
polynomially frustrated, is promise-\MA{}-complete. 
Bravyi~\cite{Bravyi14}
 also proved \MA{}-completeness of yet another quantum Hamiltonian problem. 
A third MA-complete problem was proposed by Crosson, Bacon and Brown \cite{CrossonBB10}, inspired by quantum adiabatic computation.\footnote{Crosson et. al.'s problem asks about the properties of the Gibbs distribution corresponding to the temperature T of a specific family of {\it classical}, rather than quantum, physical systems, defined using local classical Hamiltonians. 
Though inspired by adiabatic quantum computation, this problem 
is in fact defined using {\it classical} terminology, since
the classical Hamiltonians can be viewed as constraint satisfaction systems.
Yet, we note that its definition uses a layer of physical notation, involving Gibbs distribution and temperature. Moreover, when stated using classical terminology, the input to this problem is restricted, in a fairly contrived way, to sets of classical constraints which can be associated with a (noisy) deterministic circuit. Both of these aspects seem to make handling this problem using standard combinatorial tools difficult or at least not very natural.}

This leaves us in a strange situation in which the known \MA{} complete problems are not stated using natural or standard complexity theory terminology. This makes us wonder if there is a fundamental reason why we cannot find classical \MA{}-complete problems, or it is just an ``accident'' that the first \MA{}-complete problems come from quantum computing.  
 Moreover, we think that new natural (classically defined) complete problems for the class \MA{} might enable access to the major open problem of derandomization of \MA{},
 and possibly to other related problems such as 
 PCP for \MA{} \cite{AharonovG19}\footnote{We notice that Drucker~\cite{Drucker11} proves a PCP theorem for \class{AM}; in the definition he uses for \class{AM}, the coins are public and the prover sees them (See Section $2.3$ in \cite{Drucker11}); but his result does not hold when the coins are private, namely for \class{MA}.}. In particular, though the problems proposed in~\cite{BravyiT09,Bravyi14} are very natural within quantum complexity theory, the fact that 
 they are defined within the area 
 of quantum computation seems to pose a language and conceptual barrier that 
 might delay progress on them, and make it hard for  
 {\em classical} complexity theorists to study them. 
 
 The goal of this work is to provide 
classically, combinatorially-defined 
 complete problems for \MA{}. 
 We hope that these definitions lead to 
further understanding of the 
class \MA{} and the \MA{} vs. \NP{} question.

One of these problems, \SetCSP{}, is morally based on 
 the \MA{} complete problem of \cite{BravyiT09}, while the other problem, \ACleanCC, seems to be a rather natural problem on graphs. 
 
  The definition of \SetCSP{} as well as the proof of its \MA{}-completeness rely on a simple but crucial insight: we can translate ``testing history states", a notion familiar in quantum complexity theory, into constraints on {\it sets} of strings. This idea is used here throughout, but in fact can be used  to translate also other quantum results related to stoquastic Hamiltonians, to the classical combinatorial language of constraints on {\it sets of strings} (see \Cref{sec:intuition} as well as \Cref{fig:comparison} for more intuition). However though this translation idea can be viewed as standing at the heart of many of the definitions and results presented here, it is in fact  {\it not} necessary to understand the proofs.

Based on this idea, as well as a simple observation made in \cite{AharonovG19} which says that the problem of \cite{BravyiT09} remains promise-\MA{} complete even when restricted to what is referred 
 to in \cite{AharonovG19} as {\it uniform}
 stoquastic Hamiltonians, we can prove the \MA{}-hardness proof of the \SetCSP{} problem as a  translation of the \MA-hardness proof given in \cite{BravyiT09} into a classical language. We notice that the proof of \cite{BravyiT09} on its own heavily relies on the Quantum Cook-Levin proof of Kitaev \cite{KitaevSV02}.  

It turns out that there is an easy reduction from the problem  
\SetCSP{} to that of \ACleanCC{}, and therefore
the proof that  \ACleanCC{} is \MA{}-hard follows
immediately. 
 
Interestingly, the proof of containment in \MA{} is rather simple for \ACleanCC{} (it is an easy application of a known result from random walk theory.) Finally, using the same reduction, we arrive at a proof that the \SetCSP problem is also in \MA{}\footnote{This in fact gives a significantly simpler version than\cite{BravyiBT06} of the proof of containment in \MA{}, in the restricted case of 
uniform stoquastic Hamiltonians.}. 

We stress that we present all proofs here
avoiding any quantum notation or quantum jargon whatsoever.
The main contribution of this work is in the definitions themselves, initiated by the small but important conceptual idea of the translation mentioned above; this translation thus provides two new, combinatorial,  and natural \MA{}-complete problems, which we believe are amenable to research in 
 a language familiar to (classical) computer scientists.

 We now describe the problems.
  In the \ACleanCC ~problem, we consider an exponentially large graph, accompanied with a function on the vertices that marks some of them. Both graph and the function on the vertices are given implicitly (and succinctly) by a polynomial size circuit. We then ask if there exists a connected component of the graph that is ``clean" (meaning that all of its vertices are {\it unmarked}) or if the graph is $\eps$-far from having this property. The notion of ``far'' is defined as follows: every set of vertices which is close to being a connected component (i.e. its expansion  is smaller than  $\eps$) must have at least an $\eps$-fraction of its vertices marked. In other words, either 
  a set is $\eps$-far from a connected component (i.e. has large expansion) or at least $\eps$ fraction of its vertices are marked. We call this problem {\it\nameSuccCC{}} ($\ACleanCC_\eps$).

 Our second \MA{}-complete problem, called
the {\it Set-Constraint Satisfaction Problem}, or \SetCSP{}, is a somewhat unexpected 
generalization of the standard Constraint Satisfaction Problem (\CSP{}). While a constraint in \CSP{} acts on a single string (deciding if it is valid or not), the generalized constraints act on {\it sets} of strings. 
We call the generalized constraints {\it set-constraints} 
(see \Cref{sec:set-csp} for the exact definition of a set-constraint.). The input to the \SetCSP{} problem is a collection of such set-constraints, and the output is whether there is a set of strings $S$ that satisfies {\it all}  set-constraints in this collection, or any set of strings $S$ is $\eps$-far from satisfying this set-constraint collection (see \Cref{def:satisfiability,def:setunsat} for formal definitions of ``satisfying a set-constraint'' and ``far''). We denote this problem by $\SetCSP_\eps$. 

Following the ideas of \cite{BravyiT09,KitaevSV02}, we show the following three claims: $i)$ 
for every inverse polynomial $\eps$, we have that $\ACleanCCeps$ is in \MA{} (Corollary \ref{cor:containment}); $ii)$ there exists an inverse polynomial function $\eps=\eps(n)$ such that $\SetCSP{}_{\eps}$ is \MA{}-hard (\Cref{cor:ma-hardness-setcsp}); and $iii)$ for all functions $\eps=\eps(n)>0$, there is a polynomial-time reduction from $\SetCSP{}_{\eps}$ to   $\ACleanCC_{\eps/2}$. %
Together, these facts imply the following results (which are proven in \Cref{sec:reduction}).

\begin{theorem}\label{thm:main}
There exists some inverse polynomial $p(x) = \Theta(1/x^3)$ such that for every inverse polynomial $p' < p$, the problems $\SetCSP_{p'(m)}$ and $\ACleanCC_{p'(m)}$ are $\MA$-complete, where $m$ is the size of the $\SetCSP$ or $\ACleanCC$ instance.
\end{theorem}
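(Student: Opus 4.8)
The plan is to assemble the three facts (i)--(iii) stated just above the theorem, while tracking the promise parameter carefully. The structural observation I would lean on is a \emph{monotonicity} of the promise in $\eps$: for both problems the YES-condition (existence of a clean connected component, resp.\ of a set $S$ satisfying all set-constraints) does not depend on $\eps$, whereas the NO-condition is ``$\eps$-far''. Since being $\eps_0$-far implies being $\eps'$-far for every $\eps' \le \eps_0$ (a set of expansion $<\eps'$ also has expansion $<\eps_0$, hence carries a marked fraction at least $\eps_0 \ge \eps'$, and the analogous implication holds for $\SetCSP$), any reduction that establishes \MA-hardness at gap $\eps_0$ automatically establishes it at every smaller gap $\eps' \le \eps_0$: YES-instances map to YES-instances unchanged, and NO-instances, being $\eps_0$-far, are in particular $\eps'$-far. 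This lets a single threshold govern a whole range of gaps.

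For containment I would argue as follows. Fact (i) (Corollary~\ref{cor:containment}) already gives $\ACleanCCparam{p'(m)} \in \MA$ for every inverse-polynomial $p'$. For $\SetCSP_{p'(m)}$ I would compose the polynomial-time reduction of (iii), which sends $\SetCSP_{p'}$ to $\ACleanCCparam{p'/2}$, with the \MA-verifier for $\ACleanCCparam{p'/2}$ supplied by (i); since $p'/2$ is again inverse-polynomial, this places $\SetCSP_{p'(m)}$ in \MA.

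For hardness I would start from fact (ii) (\Cref{cor:ma-hardness-setcsp}), which yields a specific inverse-polynomial $\eps_0 = \Theta(1/m^3)$ with $\SetCSP_{\eps_0}$ being \MA-hard, and set the threshold $p(x) = \eps_0/2 = \Theta(1/x^3)$. Fix any inverse-polynomial $p' < p$. Then $p' < \eps_0$, so by the monotonicity argument $\SetCSP_{p'(m)}$ is \MA-hard. For $\ACleanCC$ I would feed $\SetCSP_{2p'(m)}$---which is \MA-hard because $2p' < 2p = \eps_0$---through the reduction of (iii), whose output gap is exactly half the input gap, obtaining \MA-hardness of $\ACleanCCparam{p'(m)}$. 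Together with the containments above, both problems are \MA-complete at the common parameter $p'(m)$.

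The main obstacle I anticipate is not any single step but the \emph{bookkeeping of the gap across the reduction}: the factor-$2$ loss in (iii) is precisely what forces the threshold to be $\eps_0/2$ rather than $\eps_0$, so that one and the same $p'$ serves both problems simultaneously; and one must check that the polynomial size blow-up of the reduction in (iii) preserves the stated $\Theta(1/m^3)$ form with respect to the \emph{output} instance size $m$. The one genuinely content-bearing point is verifying that the hardness construction underlying (ii) yields a \emph{cubic} inverse-polynomial gap (inherited from the spectral/random-walk analysis of the history-state construction), which is what pins down the exponent $3$; everything else is routine composition of reductions and verifiers.
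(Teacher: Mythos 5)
Your proposal is correct and follows essentially the same route as the paper: combine the \MA{}-hardness of $\SetCSP_{\tilde{p}}$ for $\tilde{p}=\Theta(1/x^3)$ (\Cref{cor:ma-hardness-setcsp}), the containment $\ACleanCC_{\eps}\in\MA$ for every inverse polynomial $\eps$ (\Cref{cor:containment}), and the gap-halving reduction of \Cref{lem:completeness-ma}, then set the threshold to $\tilde{p}/2$. The only difference is presentational: you spell out the monotonicity of the promise in $\eps$ and the composition for each $p'<p$ explicitly, whereas the paper leaves these implicit in the ``for every $p'<p$'' phrasing of the cited lemmas.
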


We stress that the definitions of both these problems and the proofs presented in this paper only use standard combinatorial concepts from set- and graph-theory.

\subsection{Conclusions, future work and open questions}
 In a similar way to what is done in this note, we claim that it is possible to translate several other results from the stoquastic local Hamiltonian language to the \SetCSP{} language, bringing us to very interesting conclusions. 

First, one could strengthen the \MA{}-hardness of \SetCSP{} (\Cref{cor:ma-hardness-setcsp}), whose proof is a translation of the proof of the quantum Cook-Levin theorem \cite{KitaevSV02}, to the restricted case in which the constraints are set on a 2D lattice. 
To do this, one could consider the result of \cite{AharonovDKLLR08} where they prove the $\QMA$-hardness of local Hamiltonians on a 2D lattice (considering only the restricted family of stoquastic Hamiltonians), and translate it to the \SetCSP{} language. 
In this way, it can be proven that 
$\SetCSP{}_\eps$ with inverse polynomial $\eps$, and with set-constraints arranged on a $2D$-lattice with constant size alphabet is still \MA{}-hard.\footnote{We conjecture that this hardness result works even with binary alphabet and we leave such a statement for future work.} We omit here the details of the proof, since it is a straight forward translation of  \cite{AharonovDKLLR08}, similarly to what's done in \Cref{cor:ma-hardness-setcsp}. This leads to the following statement.

\begin{lemma}
There exists some inverse polynomial $p$ such that for every inverse polynomial $p' < p$,
$\SetCSP{}_{p'}$ is 
\MA{}-hard even when each bit participates in $O(1)$ set-constraints, and each set-constraint acts on $O(1)$ bits.
\end{lemma}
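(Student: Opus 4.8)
The plan is to reuse the translation machinery developed for \Cref{cor:ma-hardness-setcsp} essentially verbatim, but to feed it a \emph{geometrically local} stoquastic Hamiltonian in place of the one coming from Kitaev's circuit-to-Hamiltonian construction~\cite{KitaevSV02}. Recall that the proof of \Cref{cor:ma-hardness-setcsp} proceeds in two conceptual stages: first, an \MA{} verifier is compiled into a uniform stoquastic local Hamiltonian whose frustration-freeness encodes acceptance (following~\cite{BravyiT09} and the uniformity observation of~\cite{AharonovG19}); second, each local Hamiltonian term is translated, term by term, into a set-constraint, so that the global frustration-free versus frustrated dichotomy becomes the satisfiable versus \setunsat{} dichotomy of \SetCSP{}. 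Crucially, this second stage is local: a Hamiltonian term acting on a set $Q$ of qubits produces a set-constraint acting only on the bits encoding $Q$. Hence the degree and locality parameters of the resulting \SetCSP{} instance are inherited directly from those of the underlying Hamiltonian, and the whole task reduces to producing a geometrically local stoquastic Hamiltonian.

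First I would replace the clock-and-propagation construction used in the first stage by the geometrically local construction of~\cite{AharonovDKLLR08}, which establishes \QMA{}-hardness of the local Hamiltonian problem for constant-dimensional particles placed on the vertices of a $2$D lattice, with only nearest-neighbor interactions. In such an instance every particle interacts with at most its four grid-neighbors, so each particle participates in $O(1)$ Hamiltonian terms and each term acts on $O(1)$ particles. Encoding each constant-dimensional particle into $O(1)$ bits and applying the term-by-term translation above, each bit then participates in $O(1)$ set-constraints and each set-constraint acts on $O(1)$ bits, which is exactly the locality promised by the lemma.

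The hard part will be verifying that the~\cite{AharonovDKLLR08} construction, when specialized to an \MA{} rather than a \QMA{} instance, still outputs a \emph{uniform stoquastic} Hamiltonian, since only for such Hamiltonians does the term-by-term \SetCSP{} translation apply. Concretely, I would start from the canonical form of an \MA{} verifier as a reversible classical circuit interleaved with coin-flip (Hadamard-type) gates --- the form for which~\cite{BravyiT09} yields stoquastic propagation terms --- and then check that each geometric gadget of~\cite{AharonovDKLLR08} (the clock layout, the propagation transitions, and the boundary and consistency penalty terms) contributes only diagonal terms together with off-diagonal terms that are non-positive in the computational basis, and that the frustration-free ground space remains a uniform superposition over a legal set of configurations. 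This gadget-by-gadget check is the geometrically local analogue of the stoquasticity verification already carried out for the non-geometric construction underlying \Cref{cor:ma-hardness-setcsp}; once it goes through, the translation applies unchanged. A final minor point is that~\cite{AharonovDKLLR08} uses particles of constant but non-binary dimension; since each such particle is encoded into $O(1)$ bits, the bounded-degree and bounded-locality conclusions are unaffected, although achieving a genuinely binary alphabet while preserving both stoquasticity and geometric locality remains open, as the accompanying footnote indicates.
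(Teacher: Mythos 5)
Your proposal follows essentially the same route as the paper, which itself only sketches this lemma: take the 2D-lattice stoquastic construction of \cite{AharonovDKLLR08}, verify that it remains uniform stoquastic when specialized to an \MA{} verifier, and translate it term-by-term into set-constraints so that the $O(1)$ degree and locality are inherited from the geometric layout. Your identification of the gadget-by-gadget stoquasticity check as the main point to verify, and of the constant-but-non-binary alphabet caveat, matches the paper's own remarks (including its footnote leaving the binary-alphabet case open).
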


We also claim that the main result of \cite{AharonovG19}, which states that some problem called stoquastic local Hamiltonian with constant gap is in \NP{},  can be translated to \SetCSP{} language, using again the same translation. 
This means that the {\it gapped} version of the $\SetCSP{}$ problem 
is in \NP{}.  By the gapped version of the problem,
we mean $\SetCSP{}_\eps$  when $\eps$ is a constant; and where we also require that 
the locality $k$ (number of bits in a set-constraint) and degree $d$ (number of set-constraints each bit participates in) are bounded from above by a constant. More concretely, we have the following.

\begin{lemma}
For any constant $\eps$ and constants $k$ and $d$, 
$\SetCSP{}_{\eps}$  is in 
\NP{} if  each bit participates in $d$ set-constraints, and each set-constraint acts on $k$ bits.
\end{lemma}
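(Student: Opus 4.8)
The plan is to obtain this containment from the same translation that underlies the \MA{}-hardness of \SetCSP{}, now run in reverse and applied to the main \NP{}-containment result of \cite{AharonovG19}. That result states that the stoquastic local Hamiltonian problem, restricted to \emph{uniform} stoquastic Hamiltonians and promised to be either frustration-free or at least a constant amount frustrated (a constant promise gap), lies in \NP{}. The dictionary developed for the hardness direction already identifies a \SetCSP{} instance with a uniform stoquastic Hamiltonian: a satisfying set $S$ corresponds to the support of a zero-energy (frustration-free) ground state, and the \setunsat{} value of $S$ corresponds to its energy. So the first thing I would do is check that the combinatorial parameters line up---that a set-constraint on $k$ bits maps to a $k$-local term and that a bit participating in $d$ set-constraints maps to degree-$d$ interaction structure---so that constant $k$ and $d$ land exactly in the constant-locality, constant-degree regime of \cite{AharonovG19}.

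The second step is to match the \emph{gaps}. I would verify that a constant $\eps$ in the \setunsat{} promise translates into a constant lower bound on the promise gap of the associated Hamiltonian: the \textbf{No} case, in which every set $S$ has $\setunsat(S) \ge \eps$, should map to ground energy at least a constant, while bounded degree $d$ caps the operator norms of the local terms, so the two facts together pin down a constant gap. Once both promises are aligned, the \NP{} witness and verifier of \cite{AharonovG19} transfer verbatim: the witness is a polynomial-size classical description of the frustration-free ground state (equivalently, of a clean connected component of the transition graph), and the verifier accepts precisely the constant-gap \textbf{Yes} instances.

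The technical heart, and the step I expect to be the main obstacle, is this \NP{} verification, which \cite{AharonovG19} supplies and which I would transport rather than reprove. Containment of the \emph{ungapped} problem in \MA{} runs through a random walk on the transition graph---from a witnessed starting string one walks for polynomially many steps and checks that no marked, unsatisfiable string is encountered, the gap guaranteeing rapid mixing so that a short walk suffices---exactly the random-walk argument behind the \MA{}-containment of \ACleanCC{}. The subtle point is the \emph{derandomization} to \NP{}: a constant spectral gap on a bounded-degree transition graph makes each connected component an expander, and it is this expander structure that \cite{AharonovG19} exploits to replace the random walk by a deterministic check against a polynomial-size witness. Ensuring that the constant-gap bound---and hence the expander property and the deterministic certificate---survives the translation intact is the delicate part; the remaining bookkeeping of locality, degree, and alphabet size is routine, just as in the hardness direction of \Cref{cor:ma-hardness-setcsp}.
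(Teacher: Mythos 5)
Your proposal follows essentially the same route as the paper: the paper itself gives no detailed proof of this lemma, stating only that the main result of \cite{AharonovG19} (containment in \NP{} of the constant-gap uniform stoquastic local Hamiltonian problem with constant locality and degree) ``can be translated to \SetCSP{} language, using again the same translation'' that underlies the hardness direction. Your additional bookkeeping---matching $k$-locality and degree $d$ to the Hamiltonian's interaction structure and checking that a constant $\setunsat$ promise corresponds to a constant promise gap---is exactly the verification the paper leaves implicit, so the two arguments coincide.
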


Together these two results lead to a very 
important and surprising 
equivalence\footnote{This equivalence was highlighted in \cite{AharonovG19} in a quantum language of stoquastic Hamiltonians}: 

\medskip

\noindent{\bf Proposition.} \MA{}=\NP{} iff 
there is a gap amplification reduction\footnote{In 
the same sense as Dinur's gap amplification reduction for 
CSP\cite{Dinur07}} for \SetCSP{}.  The existence of a gap amplification reduction means that there exists a constant $\eps>0$, such that for every inverse polynomial $p$, there is a polynomial time reduction from $\SetCSP_{p(n)}$ to $\SetCSP_{\eps}$.

\medskip

We note that it is easy to see that \MA{}=\NP{} implies such gap-amplification for \SetCSP{}: if $\MA=\NP$, then we can reduce $\SetCSP_{1/poly}$, which is in $\MA{}$-complete by \Cref{thm:main}, to $\CSP_{O(1)}$, which is \NP{}-complete by the PCP theorem~\cite{Dinur07}; then, since every $\CSP$ instance is also a $\SetCSP$ instance with the same parameters, we have the gap-amplification.  It is the other direction of deriving \MA{}=\NP{} from gap-amplification for \SetCSP, that is the new contribution. This implication suggests a new path to the long standing 
open problem of derandomizing \MA{}.  

Finally, we leave as an open problem deriving such a natural statement regarding gap amplification for the  \ACleanCC{} problem. Though the two problems are technically very related, defining a natural restricted version of the gapped \ACleanCC problem, so that the \cite{AharonovG19} result would apply to show containment in \NP{} (similarly to \SetCSP{} with constant $\eps$, locality $k$ and degree $d$) remains open. The problem is that the locality notions don't have a very natural analogues in the graph language of \ACleanCC. 
\\~\\
\noindent\textbf{Organization.}
We provide notation and a few basic notions in
\Cref{sec:prelim}. In \Cref{sec:succccc}, we define the \nameSuccCC{} problem and prove its containment in \MA{}. The definition of \SetCSP{} and the proof of its \MA{}-hardness are in 
\Cref{sec:set-csp}. The reduction from \SetCSP{} to \ACleanCC{} appear in
\Cref{sec:reduction}. %
In \Cref{ap:classes,ap:rev,ap:graph} we provide some basic background on computational complexity and random walks, and prove some standard technical results we need. In \Cref{app:pspace}, we also prove the \PSPACE{}-completeness of the exact version of the \ACleanCC{} problem.

\section{Preliminaries}
\label{sec:prelim}

For $n \in \mathbb{N}^+$, we denote $[n] = \{0,...,n-1\}$. For any $n$-bit
string, we index its bits from $0$ to $n-1$.
For $x \in \01^n$ and $J \subseteq [n]$, we denote $x|_{J}$ as the substring
of $x$ on the positions contained in $J$.
For $x \in \01^{|J|}$, $y = \01^{n - |J|}$ and $J \subseteq [n]$,
we define
$x^{\reg{J}}y^{\reg{\overline{J}}}$ to be
the unique $n$-bit string $w$ such that $w|_{J} = x$ and $w|_{\overline{J}} = y$, where
 $\overline{J} = [n] \setminus J$. For
two strings, $x$ and $y$, we denote by $x \concat y$ their concatenation, and $|x|$ denotes the number of bits in $x$.  

\subsection{Complexity classes}
A (promise) problem  $A = (A_{yes}, A_{no})$ consists of two non-intersecting sets 
 $A_{yes}, A_{no} \subseteq \{0,1\}^*$. 
For completeness, we add in \Cref{ap:classes} the definitions of the two main complexity classes that are considered in this work: \NP{} and \MA{}.  

The standard definition of
\MA{} \cite{Babai85} requires yes-instances to be accepted with probability at least $\frac{2}{3}$, but it has been shown that there is no change in the computational power  if we
require the verification algorithm to always accept yes-instances~\cite{Zachos1987,Goldreich2011}.

\subsection{Reversible circuits}
\label{sec:reversible}
It is folklore that the verification algorithms for
\NP{} and \MA{} can be converted into a uniform family of 
polynomial-size Boolean circuits,
made of reversible gates, 
 $\{\NOT, \CNOT, 
\CCNOT\}$ by making use of additional auxiliary bits
initialized to $0$, with only linear overhead. 
For randomized circuits, we can also assume the circuit uses only reversible gates, by assuming
that the random bits are part of input. See Appendix \ref{ap:rev} for more details. 

Let $G \in \{\NOT, \CNOT, \CCNOT\}$ be a gate to be applied on the set of bits $J$ out of some
$n$-bit input $x$. 
 We slightly abuse notation (but make it much shorter!) 
 and denote 
$G(x) \defeq x|_{\overline{J}}^{\reg{\overline{J}}}G(z|_{J})^{\reg{J}}$.
Namely, we understand the action of the $k$-bit gate $G$ on an $n>k$ bit string $x$ by applying $G$ only on the relevant bits, and leaving all other bits intact.

\section{\NameSuccCC{} problem}
\label{sec:succccc}
In this section, our goal is to present the  \nameSuccCC{} problem and prove its containment in \MA{}.  But before that, we explain the {\em exact} version of this problem.

For a fixed parameter $n$, we consider a graph $G$ of $2^n$ nodes, which is described by a classical circuit $C_G$ of size (number of gates) $\poly{n}$, as follows. For simplicity, we represent each vertex of $G$ as an $n$-bit string, and $C_G$, on input $x \in \01^n$, outputs the (polynomially-many) neighbors of $x$ in $G$.\footnote{We notice that usually we succinctly describe graphs by considering a circuit that, on input $(x,y)$, outputs $1$ iff $x$ is connected to $y$. In our result it is crucial that given $x$, we are able to efficiently compute {\it all} of its neighbors.} We are also given a circuit $C_M$, which when given input $x \in \01^n$, outputs a bit indicating whether the vertex $x$ is marked or not. 

We define the \nameBaseSuccCC{} problem ($\CleanCC{}$) which, on input $(C_G,C_M)$, asks if $G$ has a connected component where {\em all vertices are  unmarked} or if all connected components have at least one marked element. We give now the formal definition of this problem.

\defproblem{\NameBaseSuccCC (\CleanCC{})}
{def:succ-ccc}{
Fix some parameter $n$. An instance of the \nameBaseSuccCC{} problem consists of two classical $\poly{n}$-size circuits $C_G$ and $C_M$. $C_G$ consists of a circuit succinctly representing a graph with $G = (V,E)$ where $V = \01^n$ and each vertex has degree at most $\poly{n}$. On input $x \in \01^n$, $C_G$ outputs all of the neighbors of $x$. $C_M$ is a circuit that on input $x \in \01^n$, outputs a bit. The problem then consists of distinguishing the two following cases:
}
{There exists one non empty connected component of $G$ such that $C_M$ outputs $0$ on all of its vertices.}
{In every connected component of $G$, there is at least one vertex for which $C_M$ outputs $1$.}

We show in \Cref{app:pspace} that this problem is \PSPACE{}-complete. Our focus here is to study the {\em approximate} version of \CleanCC{}, where we ask whether $G$ has a clean connected component or it is ``{\it far}'' from having this property, meaning that for every set of vertices $S$, the boundary\footnote{
As defined in \Cref{ap:graph}, the boundary of a set of vertices $S \subseteq V$ is defined as $\partial_{G}(S) = \{\{u,v\} \in E: u \in S, v \not\in S\}$.
} of $S$ is at least $\eps|S|$ (and therefore $S$ is far from being a connected component), or, if the boundary is small (i.e. $S$ is close to a connected component) it contains at least $\eps|S|$ marked elements. Here is the definition of the problem.

\begin{definition}[\NameSuccCC ($\ACleanCCeps$)]
Same as \Cref{def:succ-ccc} with the following difference for no-instances:

\noindent \textbf{No.}
For all non empty $S \subseteq V$, we have that either
\[|\partial_{G}(S)| \geq \eps|S| \text{ \quad or \quad }
\left|\left\{x \in S : C_M(x) = 1 \right\}\right| \geq \eps |S|.
\]
\end{definition}

In the remainder of this section, we show that $\ACleanCCeps$ is in \MA{} for every inverse polynomial $\eps$. 

\subsection{Inclusion in \MA}\label{S:ma}

The idea of the proof is as follows. The prover sends a vertex that belongs to the (supposed) clean connected component and then the verifier performs a random-walk on $G$ for sufficiently (but still polynomially-many) steps and rejects if the random walk encounters a marked element.

In order to prove that this verification algorithm is correct, we first prove a technical lemma regarding the random-walk on no-instances.

\begin{lemma}[In NO-instances the random walk reaches marked nodes quickly]
  \label{lem:soundness-ma}
 Let $\eps$ be an inverse polynomial function of $n$. If $(C_G,C_M)$ is a no-instance of $\ACleanCCeps$,
  then there exist 
  polynomials $q_1$ and $q_2$ such that  for every $x \in \01^n$,
  a $q_1(n)$-step lazy random walk\footnote{As is defined in \Cref{ap:graph}, in a lazy random walk, at every step we stay in the current vertex with probability $\frac{1}{2}$, and with probability $\frac{1}{2}$ we choose a random neighbor of the current vertex uniformly at random and move to it.} starting at $x$ reaches a marked vertex with
  probability at least
  $\frac{1}{q_2(n)}$.
\end{lemma}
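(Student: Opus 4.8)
The plan is to model the verifier's walk as an \emph{absorbing} random walk on the set of unmarked vertices, and to show that the no-instance promise forces this absorbing walk to have an inverse-polynomial spectral gap, so that the surviving (non-absorbed) probability mass decays geometrically at an inverse-polynomial rate. We may assume $x$ is unmarked, since otherwise the walk has already met a marked vertex at step $0$. Write $U = \{v : C_M(v) = 0\}$ for the unmarked vertices, let $D = \poly{n}$ be the degree bound, let $d_v$ denote the degree of $v$, and let $\pi(v) = d_v/(2|E|)$ be the stationary distribution of the lazy walk $P$ on $G$ (which is reversible). Let $P_U$ be the substochastic matrix obtained by restricting $P$ to $U$, so that every transition into a marked vertex is treated as absorption. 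The probability that a $t$-step walk from $x$ has not yet reached a marked vertex is exactly the surviving mass $\mathbf 1_x^{\top} P_U^{t} \mathbf 1$, so it suffices to drive this quantity below a constant for some $t = \poly{n}$.

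The key step is to extract an inverse-polynomial conductance from the promise. For any nonempty $S \subseteq U$ the set $S$ contains no marked vertex, so the no-instance condition rules out its second alternative and forces $|\partial_G(S)| \ge \eps|S|$. A direct computation with the lazy walk gives the ergodic flow out of $S$ as $Q(S,\overline S) = |\partial_G(S)|/(4|E|)$, while $\pi(S) = \mathrm{vol}(S)/(2|E|)$ with $\mathrm{vol}(S) = \sum_{v \in S} d_v \le D|S|$; hence for every such $S$,
\[
\frac{Q(S,\overline S)}{\pi(S)} \;=\; \frac{|\partial_G(S)|}{2\,\mathrm{vol}(S)} \;\ge\; \frac{\eps|S|}{2D|S|} \;=\; \frac{\eps}{2D}.
\]
Thus the conductance profile satisfies $\phi = \min_{\emptyset \ne S \subseteq U} Q(S,\overline S)/\pi(S) \ge \eps/(2D)$. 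Feeding this into the Cheeger-type inequality for the absorbing (Dirichlet) walk, which bounds the top eigenvalue $\lambda_1$ of $P_U$ by $1 - \lambda_1 \ge \phi^2/2$ (a standard reversible-chain fact, of the kind collected in \Cref{ap:graph}), yields the inverse-polynomial gap $1 - \lambda_1 \ge \eps^2/(8D^2)$.

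It remains to turn the spectral gap into a bound on the surviving mass, and here the main obstacle appears: the relevant region of $U$ may be exponentially large, so any argument that first waits for the walk to mix, or that bounds an expected hitting time directly, would only give an exponential time bound. I would sidestep this by working spectrally. Since $P_U$ is self-adjoint with respect to the $\pi$-inner product, the usual similarity transform $M = D_\pi^{1/2} P_U D_\pi^{-1/2}$ (with $D_\pi = \mathrm{diag}(\pi)$) is symmetric with $\|M\|_{\mathrm{op}} = \lambda_1$, and a Cauchy--Schwarz estimate gives $\mathbf 1_x^{\top} P_U^{t} \mathbf 1 \le \pi(x)^{-1/2}\lambda_1^{t}$. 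The crucial point is that although $\pi(x)^{-1/2}$ can be as large as roughly $2^{n/2}\sqrt{D}$, its logarithm is only $O(n)$, so it is comfortably absorbed by the geometric decay $\lambda_1^{t} \le e^{-(1-\lambda_1)t}$. Choosing
\[
q_1(n) \;=\; \Big\lceil \tfrac{8D^2}{\eps^2}\big(\tfrac{n}{2}\ln 2 + \tfrac12\ln D + \ln 2\big)\Big\rceil \;=\; \poly{n}
\]
makes the surviving mass at most $1/2$, so the walk reaches a marked vertex with probability at least $1/2$, and $q_2(n) = 2$ suffices.

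Finally I would check the degenerate cases to confirm the estimate holds for \emph{every} starting $x$: an isolated unmarked vertex cannot occur, since $S = \{x\}$ would then satisfy $|\partial_G(S)| = 0 < \eps$ and contain no marked vertex, violating the promise; and a small component in which $x$ merely has a marked neighbour is handled directly by the same bound (indeed the per-step absorption probability is already at least $1/(2D)$ there). Hence the conclusion holds uniformly over $x \in \01^n$.
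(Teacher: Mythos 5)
Your proof is correct, but it takes a genuinely different route from the paper's. The paper restricts attention to the connected component of $x$, contracts all marked vertices of that component into a single weighted vertex $b^*$, lower-bounds the conductance of the contracted graph by $\eps/d$, and then invokes the Jerrum--Sinclair mixing-time bound (\Cref{lemma:mixing-time}) to conclude that after polynomially many steps the walk is close to stationary and therefore sits at $b^*$ with probability at least $\pi(b^*)-\eps'\geq \eps/(4d)$; hitting $b^*$ in the contracted graph is the same event as hitting a marked vertex in $G$. You instead keep the marked vertices as absorbing states, pass to the substochastic restriction $P_U$, convert the same boundary bound $|\partial_G(S)|\geq\eps|S|$ for $S\subseteq U$ into a Dirichlet conductance bound, and deduce via a Cheeger-type inequality a spectral gap $1-\lambda_1\geq \eps^2/(8D^2)$ for $P_U$; the $\pi(x)^{-1/2}$ prefactor from Cauchy--Schwarz costs only $O(n)$ in the exponent, so a polynomial number of steps suffices. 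Both arguments are sound and give the same polynomial order for $q_1$; yours yields a constant success probability ($q_2=2$) rather than the paper's inverse-polynomial $\eps/(4d)$, which is slightly stronger though immaterial after the amplification in \Cref{cor:containment}. Two caveats: the Dirichlet (absorbing-chain) form of the Cheeger inequality is standard but is \emph{not} among the facts proved in \Cref{ap:graph} --- the appendix only contains the ergodic mixing bound --- so you would need to supply a proof or a precise citation (and pin down the constant) for that step; and your parenthetical claim that an argument which ``first waits for the walk to mix \dots would only give an exponential time bound'' is mistaken, since the paper's contraction trick makes exactly such an argument run in polynomial time, $\ln(1/\pi_{\min})$ being $O(n)$ there as well.
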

\begin{proof} 
Let $x$ be some initial (unmarked) vertex (notice that we can assume that $x$ is unmarked since otherwise 
the lazy random walk reaches a marked vertex with probability $1$). Let $G_x$ be the
  connected component of $x$ in $G$ and $V_x$ be the set of vertices in the connected
  component of $x$. We partition $V_x$ into $A$, the unmarked vertices in $V_x$ 
  and $B =
  V_x\setminus A$, the marked vertices in $V_x$.

  We want to upper bound the size of the edge boundary of any set $S \subseteq A$ which contains only unmarked strings.
We 
  claim that by the conditions of the lemma, it must be 
  that for any $S \subseteq A$, 
  we have the following bound:  
  \begin{equation}
  |\partial_{G}(S)| \geq \eps|S|,
  \end{equation} 
  otherwise $S$ would contradict the fact that $(C_G,C_M)$ is a no-instance, since it only contains unmarked vertices. 

  We can now ask how fast does a lazy random walk 
  starting from $x$, reach an element in $B$. 
  This is a well known question from random walk theory, 
  and it can be stated as follows.

\begin{lemma}[Escaping time of high conductance subset]\label{lem:hitting-time}
  Let $G = (V =A\cup B,E)$ be a simple (no multiple edges) 
  undirected connected graph, such that for
  every $v \in A$
  $d_G(v) \leq d$ , and such that for some $\delta < \frac{1}{2}$, 
  for all $A' \subseteq A$, we have that 
  $|\partial_G(A')| \geq \delta |A'|$.
Then  a $\left( 
   \stepsrandomwalk
  \right)$-step lazy random walk starting in any $v \in A$ reaches some vertex $u
  \in B$ with probability at least $\frac{\delta}{4d}$.
\end{lemma}
  We defer the proof of this lemma to \Cref{ap:graph} and we apply it using $G = G_x$, $A$, $B$, 
  $d$ is the ($\poly{n}$-large) maximum degree of $G$,
  and $\delta \geq \eps$.
  It follows that a $\left(\frac{16d^2}{\eps^2} \left(n + \ln\left(2d/\eps\right)\right)\right)$-step lazy random walk starting on
  any $x \in G_x$ reaches a marked vertex with probability at least
  $\frac{\eps}{4d}$.
\end{proof}

From the previous lemma, we can easily achieve the following.

\begin{corollary}\label{cor:containment}
For any inverse polynomial $\eps$, $\ACleanCCeps$ is in \MA.
\end{corollary}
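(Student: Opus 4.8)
The plan is to exhibit an explicit Merlin--Arthur protocol for $\ACleanCCeps$ and verify completeness and soundness, with \Cref{lem:soundness-ma} supplying the only nontrivial ingredient. First I would specify the protocol. Merlin sends a single vertex $x \in \01^n$, the claimed member of a clean connected component. Arthur first evaluates $C_M(x)$ and rejects immediately if $x$ is marked; otherwise he simulates a $q_1(n)$-step lazy random walk starting at $x$, computing the neighbors at each step with $C_G$, and rejects the moment $C_M$ reports a marked vertex. Since $C_G$ and $C_M$ have size $\poly{n}$ and the walk has polynomially many steps, a single run costs polynomial time, and Merlin's message has length $n$; the walk uses Arthur's private coins, as required by the definition of \MA{}.

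For completeness, I would observe that in a yes-instance some non-empty connected component is entirely unmarked, and Merlin sends any vertex $x$ of it. A lazy random walk never leaves the connected component of its starting vertex, so every vertex it ever visits is unmarked, and Arthur accepts with probability $1$. Hence the protocol has perfect completeness, which is permissible for \MA{} by the remark in \Cref{sec:prelim}.

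For soundness, I would invoke \Cref{lem:soundness-ma}: on a no-instance, whatever vertex $x$ Merlin sends, a single $q_1(n)$-step lazy random walk from $x$ hits a marked vertex---and thus forces rejection---with probability at least $1/q_2(n)$. To amplify this inverse-polynomial rejection probability, Arthur repeats the walk independently $T = n\cdot q_2(n)$ times from the same $x$ and rejects if any run encounters a marked vertex. The acceptance probability on a no-instance is then at most $\left(1 - 1/q_2(n)\right)^{T} \le e^{-n}$, while perfect completeness is preserved since every run stays inside the clean component. This gives the required \MA{} gap, with witness length and total running time polynomial in $n$ whenever $\eps$ is inverse-polynomial (so that $q_1, q_2$ are polynomials).

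The only genuinely non-routine step is \Cref{lem:soundness-ma}, which is already established; everything else is bookkeeping. The main point to handle carefully is simply ensuring that the repeated-walk amplification keeps both the witness size and the running time polynomial, which it does precisely because inverse-polynomial $\eps$ forces $q_1$ and $q_2$ to be polynomials.
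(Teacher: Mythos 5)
Your proposal is correct and matches the paper's own proof essentially step for step: the witness is a vertex of the claimed clean component, the verifier runs $nq_2(n)$ independent $q_1(n)$-step lazy random walks and rejects upon seeing a marked vertex, with perfect completeness because the walk cannot leave the component and soundness amplified from \Cref{lem:soundness-ma} exactly as you describe. No substantive differences to report.
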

\begin{proof}
The witness for the $\ACleanCC$ instance consists of some
string $x$, which is supposed to be in a clean connected component of $G$. Define $q_1$ and $q_2$ as the same polynomials of
  \Cref{lem:soundness-ma}, namely, let $q_1=\left(\frac{16d^2}{\eps^2} \left(n + \ln\left(2d/\eps\right)\right)\right)$ and $q_2 = \frac{4d}{\eps}$. The $\MA{}$-verification algorithm consists of repeating $nq_2(n)$ times the following process: 
  start from $x$, perform a $q_1(n)$-step lazy random walk in $G$,  reject if any of these walks
  encounters a marked vertex, otherwise accept. 
  
  If $(C_G,C_M)$ is a yes-instance, then the prover can 
  provide a vertex $x$ which belongs to the clean connected component.  
  Any walk
  starting from $x$ remains in its connected component and thus it will never encounter a marked vertex and the verifier accepts with probability $1$.

  If $(C_G,C_M)$ is a no-instance, from \Cref{lem:soundness-ma}, each one of the
  random walks finds a marked vertex with probability at least $\frac{1}{q_2(n)}$. Thus, if we perform $nq_2(n)$ lazy random walks, the probability
  that at least one of them finds a marked vertex is exponentially close to $1$.
\end{proof}

\section{Set-Constraint Satisfaction Problem}
\label{sec:set-csp}
In this section we present the Set
Constraint Satisfaction Problem (\SetCSP{}), and then prove its \MA{}-hardness. 

\subsection{Definition of the \SetCSP{} problem} 

\medskip

We start by recalling the standard Constraint Satisfaction Problem (\CSP{}). We
choose to present \CSP{} in a way which is more adapted to our generalization (but still
equivalent to the standard definition). An instance of \kCSP{} is a sequence of
constraints $C_1,...,C_m$. In this paper, we see each constraint $C_i$ as a tuple $(J(C_i), Y(C_i))$, where $J(C_i)\subseteq [n]$ is a subset of at most $k$ distinct 
elements of $[n]$ (these are referred to as ``the bits on which the constraint acts") and $Y(C_i)$ is a subset of $k$-bit strings, namely $Y(C_i) \subseteq \01^{|J(C_i)|}$ (these are called the ``allowed strings"). 
We say that a string $x \in \01^n$ {\it satisfies} $C_i$ if $x|_{J(C_i)} \in Y(C_i)$.
The familiar problem of \kCSP, in this notation, is to decide whether there exists an $n$-bit string $x$ which satisfies $C_i$, for all $1 \leq i \leq m$.

\medskip

In \kSetCSP{}, our generalization of \kCSP{}, 
the constraints $C_i$ are replaced by what we call ``set-constraints'' which are satisfied by (or alternatively,  that {\it allow}), {\em sets} of strings $S \subseteq \{0,1\}^n$.

\begin{definition}[Set constraint] 
A $k$-local set-constraint $C$ consists of a) 
a tuple of $k$ distinct elements of $[n]$, denoted 
$J(C)$ (we have $|J(C)|=k$, and we refer to $J(C)$ as the bits which the 
set-constraint $C$ acts on), 
and b) 
a collection of sets of strings, $Y(C) = \{Y_1,...,Y_{\ell}\}$, where 
$Y_i \subseteq \{0,1\}^k$ is a set of $k$-bit strings and $Y_i \cap Y_j = \emptyset$ for all distinct $1 \leq i,j \leq \ell$. 
\end{definition} 

\begin{definition}[A set of strings satisfying a constraint]
  \label{def:satisfiability}
We say that a set of strings $S\subseteq \{0,1\}^n$ {\it satisfies} the $k$-local set-constraint $C$
if first, the $k$-bit restriction of any string in $S$ to $J(C)$ is
contained in one of the sets in $Y(C)$, i.e., for all $x \in S$, 
$x|_{J(C)} \in \bigcup_j Y_j$. 
Secondly
we require that if
$x\in S$ and $x|_{J(C)} \in Y_j$, 
then for every $y$ such that $y|_{J(C)} \in Y_j$ and $y|_{\overline{J(C)}} =
x|_{\overline{J(C)}}$, then $y \in S$. In other words,  for any string $s\in S$, one can replace its $k$-bit restriction to the bits $J(C)$, which is a string in 
some $Y_j\in Y(C)$, with a different $k$-bit string in $Y_j$, and the resulting string $s'$ must also be in $S$.  
\end{definition} 

An instance of $\kSetCSP$ consists of $m$ such $k$-local
set-constraints, and we ask if there is some non-empty $S \subseteq \01^n$ 
that satisfies each of the set constraints, or if any set $S$ of $n$-bit strings
is {\em far} from satisfying the collection of set-constraints. 
How to define {\it far}? 
We quantify the distance from satisfaction using a generalization of the familiar notation 
of $\textsf{unsat}$ from PCP theory \cite{Dinur07}; we denote the generalized notion by $\setunsat(\calC,S)$. 
Intuitively, this quantity captures how much we need to modify $S$ in order to satisfy 
the collection of set-constraints. 
Making this definition more precise 
will require some work; However we believe it already makes some sense intuitively, 
so we present the 
definition of the \SetCSP{} problem 
now, and then provide the exact definition of $\setunsat(\calC,S)$ in  \Cref{sec:frustration}. 

\defproblem{$k$-local Set Constraint Satisfaction problem
($\kSetCSP_\eps$)}
{def:set-csp}{Fix 
the two constants $d,k \in \mathbb{N}^+$, as well as a monotone function $\eps : \mathbb{N}^+
\rightarrow (0,1)$ to be some parameters of the problem. An instance to the {\em
$k$-local Set Constraint Satisfaction problem} is a sequence of $m(n)$ $k$-local
set-constraints
  $\calC = (C_1,...,C_m)$ on $\01^n$, 
  where $m$ is some polynomial in 
  $n$.
  Under the promise 
  that one of the following two holds, decide whether: 
}
{There exists a non-empty $S \subseteq \01^n$ that satisfies all set constraints in $\calC$: $\setunsat(\calC,S)=0$}
{For all $S \subseteq \01^n$, 
$\setunsat(\calC,S)\ge \eps(n)$.}

\subsection{Satisfiability, frustration and the definition of $\setunsat(\calC,S)$.} 
\label{sec:frustration}
We present some concepts required for 
the formal definition of $\setunsat(\calC,S)$.
\begin{definition}[$C$-Neighboring strings]
  \label{def:neighbors}
Let $C$ be some set-constraint.  
Two distinct strings $x$ and $y$ are
  said to be $C$-neighbors if 
$x|_{\overline{J(C)}} = y|_{\overline{J(C)}}$  and
  $x|_{J(C)},y|_{J(C)} \in Y_i$, for some $Y_i \in Y(C)$.
  We call a string $x$ a $C$-neighbor 
  of $S$ if there 
  exists a string $y\in S$ such that $x$ is a $C$-neighbor of $y$.
\end{definition}

We also define the $C$-longing strings in a set of strings $S$: these are the strings that are in $S$ but are $C$-neighbors of some string that is not in $S$.

\begin{definition}[$C$-Longing strings]\label{def:missing}
Given some set $S\subseteq\{0,1\}^n$ and a set-constraint $C$, $x \in  S$ is a $C$-longing\footnote{The term "$C$-longing" reflects the sentiment that the string $x$ "wants" to be together with $y$ in $S$; the set constraint makes sure that there is an energy penalty if this is not the case.} string with respect to $S$ if $x$ is a $C$-neighbor of some $y \not\in S$.
\end{definition}

A useful definition is that of {\it bad} strings for some set-constraint $C$, which in short are the strings that do not appear in any subset of $Y(C)$. 

\begin{definition}[$C$-Bad string]\label{def:bad}
Given a set-constraint $C$, with $Y(C)=\{Y_1,...Y_\ell\}$, 
a string $x \in \{0,1\}^n$ is $C$-bad if $x|_{J(C)} \not\in \bigcup_i Y_i$.  We abuse the notation and whenever $x$ is $C$-bad, we say $x \not\in Y(C)$. 
\end{definition}
The following complementary definition will be useful: 
\begin{definition}[Good string]\label{def:good}
We say that a string is $C$-good if it is not $C$-bad.
We say that it is a ``good string for the set-constraint collection $\calC$'' if it is $C$-good for all set-constraints $C\in \calC$. 
When the collection $\calC$ is clear from context (as it is throughout this note), we omit mentioning of the set-constraints collection $\calC$ and just say that the string
is ``good". 
\end{definition} 

Given \Cref{def:bad,def:missing} above, it is easy to see that 
a set of strings $S\subseteq \{0,1\}^n$ satisfies 
a set-constraint $C$ (by \Cref{def:satisfiability}) iff $S$ contains no $C$-bad strings and no $C$-longing strings. 

We now provide a way to quantify how far $S$ is from satisfying $C$.

\begin{definition}[Satisfiability of set-constraints]\label{def:setunsat}
Let  $S \subseteq \01^n$ and $C$ be a $k$-local set-constraint. Let $B_C$ be the
  set of $C$-bad strings in $S$ and $L_C$  be the set of $C$-longing strings in $S$. Note that $L_C \cap B_C=\emptyset$. The
  $\setunsat$ value (which we sometimes refer to as the {\it frustration}) of a set-constraint $C$ with respect to $S$, is defined by 
\begin{align}\label{unsat}
\setunsat(C,S) = \frac{|B_C|}{|S|} + \frac{|L_C|}{|S|}
\end{align}

\noindent Given a collection of $m$ $k$-local set-constraints $\calC = (C_1,...,C_m)$, its
  $\setunsat$ value (or frustration) with respect to $S$ is defined as average frustration of the different set-constraints: 
\begin{align}\label{unsatave}
\setunsat(\calC ,S) = \frac{1}{m}\sum_{i=1}^m \setunsat(C_i,S).
\end{align}
We also define the frustration of the set collection $\calC$: 
\begin{align}\label{unsatmin}
\setunsat(\calC) = \min_{\substack{S \subseteq \01^n \\  S \ne \emptyset}}
\{\setunsat(\calC,S)\}.
\end{align}

We say that $\calC$ is satisfiable if  $\setunsat(\calC)=0$ and for $\eps > 0$, we
say that 
$\calC$ is $\eps$-frustrated if 
$\setunsat(\calC)\geq \eps$.
\end{definition}

Notice that the normalization factors in \Cref{unsat} guarantees that the $\setunsat$ value lies between  $0$ and $1$.\footnote{The lower bound is trivial since the value cannot be negative. For the upper-bound, notice that $B_C,L_C \subseteq S$ and $B_C \cap L_C = \emptyset$. This is because bad strings have no neighbors whereas longing strings do. Hence  
$ \frac{|B_C|}{|S|} + \frac{|L_C|}{|S|} \leq  \frac{|S|}{|S|} = 1$.} 
{~}

\subsection{Intuition and standard CSP as special case}
 We can present a standard \kCSP{} instance consisting of constraints 
 $C_1,C_2,...,C_m$
 as an instance of $\kSetCSP$ in the following way: For each
 constraint $C_\ell, \ell\in\{1,...,m\}$ out of the $m$ constraints in the $\kCSP$ instance, 
 we consider the following set-constraint $C_\ell'$: for 
 every $k$-bit string $s$ that {\em satisfies} $C_\ell$, add the subset $Y_s = \{s\}$ to $C_\ell'$.  
 We arrive at a collection $\calC$ of $m$ set-constraints, where each set-constraint $C_\ell'$ in $\calC$ consists of single-string sets 
 $Y_i$ 
 corresponding to all strings which satisfy $C_\ell$.

We claim that the resulting \SetCSP{} instance has $\setunsat(\calC)=0$
if and only if the original \CSP{} instance was satisfiable. 
 First, if the original \CSP{} instance is satisfiable, we claim that for any 
 satisfying string $s$ we can define the set $S=\{s\}$ consisting of that single string, and $S$ indeed satisfies the collection of set-constraints defined above. 
To see this, note that in our case, there is no notion of $C$-neighbors (See Definition \ref{def:neighbors}), since all $Y_i$'s contain only a single string. Hence, there are no longing strings; 
 By the definition of $\setunsat$ (\Cref{unsat}) in this case 
  $\setunsat(\calC,S)=0$ if all strings in $S$ are good for all
 set-constraints $C'$, namely each of them satisfies all 
 constraints $C$ in the original $\kCSP$ instance, which is indeed the case if we pick a satisfying assignment. For the other direction, assume $\setunsat(\calC,S)=0$ for some set $S$. This in particular means that all strings in $S$ are $C$-good for all set-constraints in $\calC$. By definition of our $\calC$, this means any string $s\in S$ is a satisfying assignment.

We give now some intuition about the $\setunsat$ quantity (\Cref{unsat}). 
We first note that it generalizes the by-now-standard notion of (un)satisfiability in \CSP{}, 
which for a given string, counts the number of unsatisfied  constraints, divided
by $m$. We note that in \Cref{unsat}, 
if $S = \{s\}$, then $|B_C|$ is either 
$0$ or $1$, depending on whether $s$ satisfies the constraint or not. As previously remarked, in \CSP{} the notion of neighboring and longing strings does not exist, and so $L_C$ will always be empty.
Thus, in the case where $S = \{s\}$ and all set-constraints containing single strings, \Cref{unsatave}  is indeed the number of violated constraints by the string, divided by $m$ -- which is exactly the standard CSP \textsf{unsat} used in PCP contexts~\cite{Dinur07}. (In the case of $S$ containing more than one string, 
but all $Y_i$s are still single-strings,  \Cref{unsatave} will just be  
the (non-interesting) average of the \textsf{unsat} of all strings in $S$).

The interesting case is the general \SetCSP{} case, when the $Y_i$'s contain more than a single string, i.e., when the notions of neighbors and longing strings become meaningful. 
In this case,
the left term in the RHS of \Cref{unsat} quantifies how far the set $S$ is from the situation in which it contains only ``good" (not ``bad") strings (this can be viewed as the standard  requirement) but it adds to it the right term, 
which quantifies how far $S$ is from being closed to 
the action of adding neighbors 
with respect to the set-constraints\footnote{Notice that we could have chosen to define "far" here, by counting  
the number of strings {\it outside} of $S$ 
that are neighbors of $S$. But since the degree of each 
string in the graph is bounded, the exact choice of the definition does not really matter; we chose the one presented here since it seems 
most natural.}.
Loosely put, the generalization from constraints to set-constraints 
imposes strong ``dependencies" between 
different strings, 
and the number of longing strings, $L_C$, counts to what 
extent these dependencies are violated. 

\subsection{\MA-hardness of $\SetCSP{}_{1/\poly{n}}$}
In this subsection, we show the \MA{}-hardness of $\SetCSP_{1/poly}$.  

\begin{lemma}\label{cor:ma-hardness-setcsp}
There exists some inverse polynomial $p(x) = \Theta(1/x^3)$ such that for every inverse polynomial $p' < p$, the problem $\SetCSP_{p'(m)}$ is $\MA$-hard.
\end{lemma}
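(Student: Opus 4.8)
The plan is to encode an arbitrary \MA{} verifier as a \SetCSP{} instance by translating Kitaev's quantum Cook--Levin history-state construction into the language of set-constraints. The key conceptual dictionary is that a ``history state'' of an \MA{} computation---the uniform superposition over the sequence of configurations visited during a reversible-circuit computation on a fixed witness and a uniformly random choice of coin string---becomes, in our combinatorial setting, the \emph{set} $S$ consisting of all the intermediate configuration strings reached along that computation. I would set up the configuration strings to carry three registers: a time/clock register written in unary (so that the clock constraints are local), a work register holding the current bit string, and a register holding the random coins, which are treated as part of the input and never overwritten (using the reversible-circuit formulation from \Cref{sec:reversible}).

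First I would build the set-constraints in three families, mirroring the three types of terms in a Kitaev Hamiltonian. The \textbf{clock/initialization constraints} force any good string to have a legal unary clock value and force the work register at clock time $0$ to hold the initialized ancillas (all zeros) together with a valid witness. The \textbf{propagation constraints} are where the genuinely new set-constraint structure is used: for each gate $G_t$ applied at step $t$, I would create a set-constraint whose allowed collection $Y(C)$ contains, as a single block $Y_j$, the pair of $k$-bit patterns $\{(\text{clock}=t, z), (\text{clock}=t{+}1, G_t(z))\}$ for each local pattern $z$. Because \Cref{def:satisfiability} demands that $S$ be \emph{closed} under swapping one string in a block $Y_j$ for another, a satisfying $S$ that contains the time-$t$ configuration is \emph{forced} to also contain the time-$(t{+}1)$ configuration obtained by applying $G_t$; this is exactly how ``longing strings'' enforce that $S$ is closed under the computation's transition relation. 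Finally the \textbf{output constraints} penalize, as $C$-bad strings, any configuration at the final clock time whose output bit indicates rejection.

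Next I would verify completeness and soundness. For completeness, given a yes-instance of the \MA{} problem there is a witness accepted with probability $1$ (using the perfect-completeness form of \MA{} noted after the definition of the class); I would take $S$ to be the set of all configuration strings arising from running the reversible verifier on that witness over all coin strings, and check directly that $S$ contains no bad and no longing strings, so $\setunsat(\calC,S)=0$. For soundness I would argue the contrapositive: if every witness is rejected with constant probability, then every candidate $S$ must be $\eps$-frustrated. The natural route is to lower-bound $\setunsat(\calC,S)$ for arbitrary $S$ by relating it to the minimal eigenvalue of the associated Kitaev Hamiltonian restricted to the relevant invariant subspace, but stated combinatorially: a set $S$ with low frustration must be close to a union of full computation ``histories'', and any such near-history on a no-instance accumulates penalty either from bad output strings (the output constraints) or from the clock/initialization constraints, giving frustration $\Omega(1/m^3)$ after the Kitaev geometric-sequence/projection-lemma bound is carried through. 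This accounts for the $\Theta(1/x^3)$ in the statement, since the spectral gap of the propagation term scales like $1/T^2$ and the restriction to low-energy clock states costs another factor.

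The main obstacle I expect is the soundness analysis: turning Kitaev's spectral-gap argument (the bound on $\lambda_{\min}$ of $H_{\mathrm{init}}+H_{\mathrm{prop}}+H_{\mathrm{out}}$ via the geometric lemma on two projectors) into a purely combinatorial lower bound on $\setunsat(\calC,S)$ for an \emph{arbitrary} set $S$, rather than for a single probability distribution. The subtlety is that $S$ need not correspond to any single legal history: it can mix configurations from many different witnesses and clock values, and can be ``partial'' (missing some configurations of a history, which is precisely what creates longing strings). I would handle this by showing that the frustration as defined in \Cref{unsat} is (up to the normalization by $|S|$) exactly the expectation of the Kitaev Hamiltonian in the uniform ``state'' supported on $S$, so that the eigenvalue lower bound transfers; making the normalization and the disjointness of $B_C$ and $L_C$ interact correctly with the averaging over the $m$ constraints in \Cref{unsatave} is the step that requires the most care, and it is also where the precise power of the inverse polynomial is pinned down.
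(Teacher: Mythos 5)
Your construction (unary clock, reversible work register, clock/initialization/propagation/output set-constraints, and the completeness argument via the set of all snapshots over all coin strings) matches the paper's reduction. However, there is one genuine gap that breaks soundness as written: you never introduce a set-constraint forcing \emph{all} assignments of the random-coin register to be present at time $0$. You treat the coins as ``part of the input and never overwritten'' and constrain only the ancillas to be zero, but then for a no-instance there may well exist a witness $y$ and a \emph{single} coin string $r$ on which the verifier accepts (soundness is $1/3$, not $0$), and the set $S$ consisting of the lone history of $(y,0^a,r)$ satisfies every clock, initialization, propagation and output constraint with $\setunsat(\calC,S)=0$. The paper's construction includes, for each random bit $j$, a constraint $C^{rand}_j$ with $Y(C^{rand}_j)=\{\{00,01\},\{10\},\{11\}\}$ acting on (first clock bit, $j$-th coin bit); the two-element block $\{00,01\}$ is what makes time-$0$ configurations differing in one coin bit into neighbors, so that omitting some coin strings creates longing strings. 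This is the one place where \MA{}-hardness genuinely departs from the classical Cook--Levin reduction, and it is exactly the ingredient your sketch omits (note that your Hamiltonian $H_{\mathrm{init}}+H_{\mathrm{prop}}+H_{\mathrm{out}}$ also lacks the corresponding coin term).

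Even after adding these constraints, your proposed soundness route differs from the paper's and leaves the hardest case unaddressed. The paper argues purely combinatorially: it partitions $S$ into history sets indexed by their (well-defined, by reversibility) initial configuration, handles invalid clocks and invalid ancillas directly, and then for each witness $y$ splits into the case where most coin strings appear at time $0$ (where the $1/3$ acceptance bound forces many bad output strings or missing final snapshots) and the case where at most half appear, which is resolved by the conductance $1/q$ of the $q$-dimensional hypercube formed by the $C^{rand}_j$ edges, yielding $\setunsat \geq \frac{1}{10(T+1)qm}$. Your route through the Kitaev spectral bound could in principle be made to work (it is essentially Bravyi--Terhal's argument), but two points need care: the identity ``$\setunsat(\calC,S)$ equals the energy of the uniform state on $S$'' is only true up to a constant factor (a partial block of size $1$ out of $2$ contributes $1$ to $|L_C|$ but only $1/2$ to the energy), and the spectral lower bound you invoke must incorporate the coin-uniformity term, which is precisely the step your sketch does not carry out.
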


To prove this lemma, we show how to reduce any language $\textsf{L} \in \MA$ to a $6$-\SetCSP{} instance. Our approach here is to ``mimic'' the Quantum Cook-Levin theorem due to Kitaev~\cite{KitaevSV02}, but given that we only need to
deal with set of strings and not arbitrary quantum states, our proof can in fact be stated in set-constraints language.

\subsubsection{Intuition for how set-constraints can check histories}  \label{sec:intuition}
In the celebrated proof of the (classical) Cook-Levin theorem, an instance of an \NP-language is mapped  to an instance for $3$-$\problem{SAT}$ problem. More precisely, the verifier 
$V$ of the \NP-problem, which runs on an $n$-bit input string $x$ and a $poly(n)$ bit witness $y$, is mapped to a $3$-$\problem{SAT}$ formula. 
To do this, a different variable is assigned to the value of each location of the tape of the Turing machine of $V$ at any time step; these variables are used to keep track of the
state of the computation of $V$ (namely what is written on the tape) at the different time steps (see \Cref{fig:comparison} for an example). The formula acts on strings of bits, which can be viewed as assignments to {\it all} these variables; such an assignment encodes the {\it entire history} of a 
single possible computation. The clauses of the
Boolean formula check if the history given by the assignment, indeed corresponds to a correct propagation of an {\it accepted} computation. More precisely, the constraints check that $a)$ the assignment to the 
$n$ Boolean variables at the first time step,  
associated with the input to the \NP-problem, is indeed correct (namely equal to the true input $x$); $b)$ the assignment of the variables corresponding to any two subsequent time steps is consistent with
a correct evolution of the appropriate gate in the computation; and $c)$ the output bit, namely the value of the output variable in the {\it last} time step, is indeed {\it accept}. There exists a
valid history which ends with accept (i.e., $x$ is in 
the language accepted by the verifier), iff the resulting \kCSP{} is satisfiable; in which case a satisfying assignment encodes a  
{\it history} of a correct computation of the verifier.

Now, suppose we want to apply a similar construction for some
\MA{} verification. The random bits are also given to the verification circuit as an input, and one could hope that the reduction of the Cook-Levin theorem would still work. However, the problem is that there could be {\it some} choice of random coins that makes the verification algorithm accept even for no-instances, because the soundness parameter is not $0$. Hence, the \CSP{} would be satisfiable in this case, even though the input is a NO instance. It is thus not sufficient to verify the existence of a {\it single} valid history. In order to distinguish between YES and NO instances of the problem, we have to check in the YES case that {\it all} (or at least many of the) initializations of the random bits lead to accept. The key difficulty is how to check that not only a single string
satisfies the constraints, but many.  

This is exactly the reason for introducing the {\em set-constraints}. These set-constraints are able to verify that the 
random bits are indeed {\it uniformly random}; then the standard Cook-Levin approach described above can verify that (given the right witness)
most of them leads to acceptance. 

In order to implement such an approach, 
we first explain how to modify the original Cook-Levin 
proof, of the \NP{} completeness of \textsf{satisfiability}, 
so that the strings that we check are not entire 
histories of a verification process of some \NP{} problem; 
rather, the strings represent 
{\it snapshots} of 
the tape (or evaluations of all bits involved in the circuit at a certain time) for {\it different time-steps} of the computation. A satisfying assignment is no longer
a single string but a whole collection of strings, denoted $S$,
which would be the {\it collection of all snapshots 
of a single valid computation, at different times}.\footnote{ 
For quantum readers, we note that this reflects the main step in Kitaev's modification of the Cook-Levin theorem, which enabled him to test that entangled quantum states 
evolved correctly.} 

In this case (note that we have still not included random bits in the 
discussion) we need to show how to create {\em set-constraints} that verify that the set $S$ really does contain {\it all} snapshots, and it also needs to verify that $S$ contains nothing else. 
More precisely, we need to verify that 
$a)$ the strings in $S$ are consistent with being snapshots of a valid evolution of the computation 
$b)$ the input is correct in the string corresponding to the first snapshot, and the output is accept in the string corresponding to the last snapshot, and $c)$ the set $S$ indeed contains the whole {\em
history} of the computation, i.e. all the snapshots in a correct evolution, and without any
missing step. 

It turns out that this can be done using set-constraints;  
We depict the differences between the ``original'' Cook-Levin proof and the ``set-constraints'' one in \Cref{fig:comparison}.

In reality, we need to do this not just for a single evolution but 
for the evolution over all possible assignments to the random strings. Moreover, we need to enforce that the random bits are indeed random; this requires further set-constraints (technically, this is done below in \Cref{eq:constraint-random}).  

We note that the essence of the translation idea mentioned in the introduction appears already when there are no random bits involved at all; the reader is recommended to pretend that no random bits are used, at her first reading. 

We next explain the reduction from \MA-verification algorithms to $6$-\SetCSP{} in
\Cref{sec:reduction-proof}, and then prove its correctness in
\Cref{sec:correctness}.

\subsubsection{The reduction}\label{sec:reduction-proof}

We assume, without loss of generality, that the \MA{}
verification algorithm is reversible, as described in \Cref{sec:reversible}:
Given an input $x \in \{0,1\}^n$, we assume a verification circuit $C_x$ whose input is
$y\concat 0^{a(n)} \concat r$, where $y  \in \{0,1\}^{p(n)}$ is the polynomial-size \MA{}
witness, the middle register consists of the circuit auxiliary bits (needed for reversibility)
and $r \in \{0,1\}^{q(n)}$ are the polynomially many random bits used during the
\MA{} verification.  The circuit $C_x$ consists of $T$ gates  $G_1,...,G_{T}$,  where
$G_i \in \{\NOT, \CNOT, \CCNOT\}$.\footnote{Recall that the gate $\CNOT$ on input $a,b$ outputs $a,b \oplus a$, and  the gate $\CCNOT$ on input bits $a,b,c$, outputs $a,b,c\oplus a b$.}   At the end of the circuit, we assume WLOG that the first bit
as the output bit.
\begin{figure}[t]
  \begin{subfigure}[b]{0.42\textwidth}
\begin{adjustbox}{max totalsize={0.9\textwidth}{\textheight},center}
\begin{circuitikz} \draw
(-2,0) node {}
(0,2) node[and port] (myand1) {\hspace{-0.6em}\Large$\wedge$}
(0,0.7) node[and port] (myand2) {\hspace{-0.6em}\Large$\wedge$}
(0,-0.6) node[and port] (myand3) {\hspace{-0.6em}\Large$\wedge$}
(0,-1.9) node[and port] (myand4) {\hspace{-0.6em}\Large$\wedge$}
(2.5,1.2) node[or port] (myor1) {}
(2.5,-1.3) node[or port] (myor2) {}
(5,0.1) node[and port] (myand5) {\hspace{-0.6em}\Large$\wedge$}
(myand1.in 1) node[left=.5cm](x1) {$y_1$}
(myand1.in 2) node[left = .5cm](x2) {$y_2$}
(myand2.in 1) node[left=.5cm](x3) {$y_3$}
(myand2.in 2) node[left = .5cm](x4) {$y_4$}
  (myand3.in 1) node[left=.5cm](x5) {$y_{5}$}
  (myand3.in 2) node[left = .5cm](x6) {$y_{6}$}
  (myand4.in 1) node[left=.5cm](x7) {$y_{7}$}
  (myand4.in 2) node[left = .5cm](x8) {$y_{8}$}
  (-2.25,2.6) node  (dummy0) {}
  (myand1.out)  node[above right](x9) {$y_{9}$}
  (myand2.out)  node[above right](x10) {$y_{10}$}
  (myand3.out)  node[above right](x11) {$y_{11}$}
  (myand4.out)  node[above right](x12) {$y_{12}$}
  (myor1.out)  node[above right](x13) {$y_{13}$}
  (myor2.out)  node[above right](x14) {$y_{14}$}
(myand5.out)  node[right](x15) {$y_{15}$}
(myand1.out) -| (myor1.in 1)
(myand2.out) -| (myor1.in 2)
(myand3.out) -| (myor2.in 1)
(myand4.out) -| (myor2.in 2)
(myor1.out) -| (myand5.in 1)
(myor2.out) -| (myand5.in 2)
(x1) -| (myand1.in 1)
(x2) -| (myand1.in 2)
(x3) -| (myand2.in 1)
(x4) -| (myand2.in 2)
(x5) -| (myand3.in 1)
(x6) -| (myand3.in 2)
(x7) -| (myand4.in 1)
(x8) -| (myand4.in 2);
  \draw[color=red,ultra thick,opacity=0.5]  (dummy0) --(x8.-90);
\draw[dashed,color=red,ultra thick,opacity=0.5]     (x8) to[out=-90,in=90] (x9);
  \draw[color=red,ultra thick,opacity=0.5]  (x9.90) -- (x12.-90);
  \draw[dashed,color=red,ultra thick,opacity=0.5]    (x12) to[out=-90,in=90] (x13);
  \draw[color=red,ultra thick,opacity=0.5]  (x13.90) --(x14.-90);
  \draw[dashed,color=red,ultra thick,opacity=0.5]    (x14) to[out=-90,in=90] (x15);
  \draw[color=red,ultra thick,opacity=0.5]  (x15.90) --(x15.-90);
  \draw node at (1.7,-3){$h = 011111010110111$};
\end{circuitikz}
    \end{adjustbox}
    \caption{In the Cook-Levin theorem, a fresh new variable is assigned to
    every wire of the circuit, and the evolution of the computation can be
    described as an assignment to the variable such that their values are
    consistent according to the circuit. In this example, we see a (very simple)
    circuit, and then the history of the computation on input $01111101$. The
    \CSP{} instance derived from the Cook-Levin theorem ensures that the
    assignment is indeed the evolution of the circuit for some input (and of
    course, that the circuit accepts at the end).
    \vspace{3.8em}
    }
    \label{fig:cook-levin}
  \end{subfigure} \quad\quad
  \begin{subfigure}[b]{0.50\textwidth}
        \begin{equation*}
        \Qcircuit @C=0.8em @R=.8em {
                       & s_0  &           & s_1 &                    & s_2  & & s_3  &  \\
          \lstick{y_1} & \qw  & \qw       & \qw & \multigate{2}{\text{\footnotesize CCNOT}}& \qw  & \qw                  & \qw  & \qw & \rstick{\text{output}} \\
          \lstick{y_2} & \qw  & \qw       & \qw & \ghost{CCNOT}       & \qw  & \multigate{2}{\text{\footnotesize CCNOT}} & \qw  & \qw\\
          \lstick{y_3} & \qw  & \gate{NOT}& \qw & \ghost{CCN}                & \qw  & \ghost{CCN}        & \qw  & \qw\\
          \lstick{y_4} & \qw  & \qw       & \qw & \qw                & \qw  & \ghost{CCN}        & \qw  & \qw
          \gategroup{2}{2}{5}{2}{.7em}{--}
          \gategroup{2}{4}{5}{4}{.7em}{--}
          \gategroup{2}{6}{5}{6}{.7em}{--}
          \gategroup{2}{8}{5}{8}{.7em}{--}
        }
        \end{equation*}
        {\footnotesize
\[H = \{\textbf{000}1110,\textbf{100}1100,\textbf{110}1110,\textbf{111}1111\}\]
    }
    \\~\\
    \caption{In this work, we consider {\em reversible} circuits and the history
    of the computation is described by a {\em set of strings}.
    We define then the {\em history-set} of the computation that contains the {\em
    snapshot} of every stage of the computation. Each such a string is augmented with a prefix (marked in bold) identifying the timestep, in unary, of the snapshot. 
    These bits, called the ``clock", are necessary in order to check the evolution.    
    In this example, we see a very simple
    circuit that has no auxiliary nor random bits.
We show the history-set of the computation on input $1110$ and the
    prefix indicating the number of the timestep (here, $0$ to $3$), counted in unary.
    The
    \SetCSP{} instance derived here ensures that a satisfying set of strings 
    contains the snapshots for {\em all} timesteps of the computation.}
    \label{fig:reversible-cook-levin}
  \end{subfigure}
\caption{Comparison between the evolution of a circuit in the Cook-Levin
proof and in our work.}
\label{fig:comparison}
\end{figure}
We describe now the reduction from the \MA{} problem  into a \SetCSP{} instance $\calC$. We will show in the
next section that if there is an \MA{} witness that makes the verification
algorithm accept with probability $1$, then $\calC$ is satisfiable, whereas if
every witness makes the verifier reject with probability at least $\frac{1}{2}$,
then $\calC$ is at least inverse polynomially frustrated.

We start with an MA verification circuit $C_x$ (assumed to be reversible, as described in \Cref{sec:reversible})  
acting on an 
input consisting of $a(n)$ $0$-bits, witness $y$ of $p(n)$ bits and $q(n)$ random bits. Thus, the number of bits which the reversible verification circuit acts on is $w(n)=a(n)+p(n)+q(n)$. 
The number of gates is $T(n)$; denote these gates by 
$G_1,...,G_T.$
Our set-constraint instance $\calC$ will act on strings of   $s(n) = T(n) + w(n)$ many bits. 
We omit $n$ from such functions from now on, since it will be clear from the context.

We call the $T$ first bits of such strings the {\it clock} register and the
last $w$ bits the {\it work} register. The work register comprises of three sub-registers: the witness register (first $p(n)$ bits), the auxiliary register (middle $a(n)$ bits) and the randomness register (last $q(n)$ bits). We want to create
set-constraints which force all the strings to be of the form  $z \in \01^{s}$ such that  $z|_{[T]} = \unary(t)$ for some $t
\in [T+1]$ (where $\unary(t)$ denotes the integer $t$ written in unary representation), 
and $z|_{[s] \setminus [T]}$ represents the {\em
snapshot} of the computation at time $t$ (as explained in \Cref{sec:intuition} above) for the initial string 
$y\concat 0^{a(n)} \concat r$, 
for some witness $y$ and
some choice of random bits $r$.

We will construct $\calC$ in such a way 
that $S \subseteq \{0,1\}^s$ satisfies $\calC$ if and only if $i)$ it contains {\it all} snapshots of 
the computation for some witness $y$ and for {\it all} bit strings $r$ input to the randomness register; and $ii)$ the
computations whose snapshots are contained  in $S$ are not only correct (meaning also that the auxiliary bits are all initialized to $0$) but that they are {\it accepted} computations (meaning that 
the output is $1$). 

We do this by providing set-constraints of four types, as follows. 

\paragraph{Clock consistency.} We first impose that if $z \in S$, then $z|_{[T]} =
    \unary(t)$, for some $t \in [T+1]$.

Notice that a string is a valid unary encoding
iff it does not contain $01$ as a substring.
To guarantee that the clock bits are consistent with some 
unary representation of an allowed $t$, we add 
 for every $t\in [T]$ the
set-constraint
 $C^{clock}_{t}$ defined by: 
 \[Y(C^{clock}_{t}) = (\{\{00\},\{10\},\{11\}\}) \text{ and } J(C^{clock}_{t}) = (t, t+1).\]

\noindent{\underline{Example:}} The string $010^{T-2} \concat z$ is a bad
 string for $C^{clock}_1$ since $w|_{J(C^{clock}_1)} = 01 \not\in
 Y(C^{clock}_1)$.

\paragraph{Initialization of Input bits and Random bits.}  Here, we want to check that $0^T\concat y \concat z \concat r$ is not in $S$ whenever $z \ne 0^a$, which enforces the ancillary bits to be initialized to $0$. In addition, we want to check that for any witness $y$,   if one string of the form $0^T\concat y \concat 0^a \concat r$ for some $r$ is in $S$, then for all $r' \in \01^q$, we have $0^T\concat y \concat 0^a \concat r'$ in $S$. 
In conclusion, we need to check two things: that all auxiliary bits are
initialized to $0$ and that all possible initializations of the random bits are present.

For each {\it auxiliary bit} $j \in [a]$ we add a set-constraint $C^{aux}_j$ and for
every random bit $j \in [q]$ we add the set constraint $C^{rand}_j$ as follows.

We define 
\[Y(C^{aux}_j) = \{\{00\},\{10\},\{11\}\} \text{ and } J(C^{aux}_j) =
(0,T+p+j),\]
which forces that for $t = 0$ (notice that the unique value of $t$ for which 
$\unary(t)$ has the first bit $0$ is $t = 0$), the $j$-th auxiliary bit must be
$0$, because the string $(01)$ is forbidden. For $t \ne 0$ this is not enforced by allowing any value of the $j$-th
auxiliary bit when the first clock bit is $1$ (and therefore $t \ne 0$).

\smallskip
\noindent{\underline{Example:}} The string $0^T \concat y \concat 10^{a-1} \concat r \in S$
is bad for $C^{aux}_0$.
\medskip

For the {\it random bits}, we want to make sure that the $j$-th
random bit has both values $0$ and $1$, over all the random bits.  Therefore 
we define the constraints $C^{rand}_j$ by 
\begin{align}\label{eq:constraint-random}
    Y(C^{rand}_j) = \{\{00,01\},\{10\},\{11\}\} \text{ and } J(C^{rand}_j) =
(0,T+p+a+j).
\end{align}  

These constraints will be useful in the following way. 
First, the propagation set-constraints that we will define soon, constrain $S$ to make sure that there {\it exists} 
a string $0^T \concat y \concat 0^a \concat r$ representing a snapshot at time $0$
with some value of the random bits, $r$, which is indeed in $S$. The $C^{rand}$ constraints enforce that given the existence of such a string in $S$, then for any other assignment to the random bits, $r'$, the string $0^T \concat y \concat 0^a \concat r' \in S$. Then, if many of these other strings are not in $S$, the frustration 
will be high.

\noindent{\underline{Example:}} 
If 
$s_1 = 0^T \concat y \concat 0^a \concat 0^r \in S$ 
but 
$s_2 = 0^T \concat y \concat 0^a \concat 10^{r-1} \not\in S$, then $s_1$ is a
$C^{rand}_1$-longing string in $S$ since
$s_1$ and $s_2$ are $C^{rand}_1$-neighbors. 

\paragraph{Propagation.} Here we want to check that 
if $\unary(t-1) \concat z \in S$ for some $0< t < T$,
    then $\unary(t) \concat G_{t}\left(z\right) \in S$.

Let us consider the propagation constraint associated with the $t$-th timestamp (the one corresponding to the application of the $t$th gate, $G_t$), for $1<t<T$. Let
us assume that the $t$-th gate acts on bits $b_{t,1},...,b_{t,k}$. For
this we add the set-constraint $C^{prop}_t$ defined as follows:
\[Y(C^{prop}_t) = \bigcup_{z\in \01^{k}} \{\{100\concat z,110\concat G_t(z)\}\}
\text{ and  } J(C^{prop}_t) = (t-1,t,t+1,b_{t,1},b_{t,2},...,b_{t,k}).\]
For $t=1$ we simply erase the left clock bit from the
above specification: 
\[Y(C^{prop}_1) = \bigcup_{z\in \01^{k}} \{\{00\concat z,10\concat G_1(z)\}\}
\text{ and  } J(C^{prop}_1) = (1,2,b_{1,1},b_{1,2},...,b_{1,k}).\]
Likewise if $t=T$ erase the right most clock bit from the above. 

\noindent{\underline{Example:}} 
If 
$s_1 = \unary(t) \concat z \in S$ 
but 
$s_2 = \unary(t+1) \concat G_{t+1}(z) \not\in S$, $s_1$ is a $C^{prop}_{t+1}$-longing string in $S$,
 since the two strings are $C^{prop}_{t+1}$- neighbors.

\paragraph{Output.} Finally, we need to check that for all strings of the form $1^T\concat z$, 
the first bit of $z$ is $1$, namely, the last snapshot 
corresponds to accept. 
We define $C^{out}$ such that 
\[ Y(C^{out}) = \{\{00\},\{01\},\{11\}\} \text{ and } J(C^{out}) = (T,T+1).\] 
Here we use the fact
that the $T$-th bit of $\unary(t)$ is $1$ iff $t = T$. In this case, if this
value is $1$, we require that the output bit is $1$, otherwise it could have
any value.

\noindent{\underline{Example:}} 
The string $1^T \concat 0 \concat z$ is bad for $C^{out}$.

\begin{remark}
We notice that for every set-contraint $C$ that we constructed above, we have that $Y(C)$ only contains sets of size $1$ or $2$. This property adds a bit more structure to $\SetCSP$ instances that are \MA{}-hard, which could be useful in future work.
\end{remark}

\newcommand{\soundness}{\ensuremath{\frac{1}{10(T+1)q m}}}

\subsubsection{Correctness}
\label{sec:correctness}
Given some \MA-verification circuit $C_x$, we consider the following $6$-\SetCSP{} instance 
\[\calC_x =
(C^{clock}_1,....,C^{clock}_T,C^{aux}_1,...,C^{aux}_a,
C^{rand}_1,...C^{rand}_q,C^{prop}_1,...,C^{prop}_T,C^{out}\}.\] 

Let $m$ be the number of terms in $\calC_x$ and when it is more convenient to us, we will refer to the set-constraints in $\calC_x$ as $C_i$ for $i \in [m]$, where the terms have an arbitrary order. We show now that $\calC_x$
is satisfiable if
$x$ is a positive instance, 
and if $x$ is a negative instance, then $\calC$ is at least \soundness-frustrated (where we remember that $q$ is the number of random coins used by $C_x$).
Notice that $m$, the number of constraints in $\calC_x$, is polynomial in $T$, which is
also polynomial in $|x|$.

\medskip

We start by proving completeness.
\begin{lemma}[Yes-instances lead to satisfiable $\SetCSP$ instances]\label{lem:hardness-completeness}
  If $x \in \mathsf{L}$, then $\mathcal{C}_{x}$ is satisfiable.
\end{lemma}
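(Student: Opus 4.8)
The plan is to exhibit one explicit satisfying set $S$, namely the \emph{history set} of the computation on a good witness, and then verify directly that it violates none of the four families of set-constraints. First I would invoke the perfect-completeness form of \MA{} recalled in \Cref{sec:prelim}: since $x\in\mathsf{L}$, there is a witness $y\in\{0,1\}^{p}$ such that for \emph{every} random string $r\in\{0,1\}^{q}$ the reversible circuit $C_x$ on input $y\concat 0^{a}\concat r$ outputs $1$. Fixing this $y$, I set $z_0^{(r)}=y\concat 0^{a}\concat r$ and $z_t^{(r)}=G_t(z_{t-1}^{(r)})$ for $1\le t\le T$ (where $G_t$ acts only on its own bits, per the convention of \Cref{sec:reversible}), and define
\[
S=\bigl\{\,\unary(t)\concat z_t^{(r)} \;:\; t\in[T+1],\ r\in\{0,1\}^{q}\,\bigr\}.
\]
This set is non-empty, and I claim $\setunsat(\calC_x,S)=0$. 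By the characterization stated right after \Cref{def:bad,def:missing}, it suffices to show that $S$ contains no $C$-bad string and no $C$-longing string for every set-constraint $C\in\calC_x$.

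The ``no bad string'' half is by design of the construction and is essentially a routine check. Every element of $S$ carries a clock block $\unary(t)$, a valid unary word that never contains the substring $01$, so no clock-consistency constraint is violated; at time $t=0$ the work register is $y\concat 0^{a}\concat r$, so all auxiliary bits are $0$ and no $C^{aux}_j$ is violated; the random-initialization constraints forbid no $2$-bit pattern at all; and at the final time $t=T$ the output bit of $z_T^{(r)}$ is $1$ for every $r$ precisely because $C_x$ accepts, so $C^{out}$ is not violated. For the propagation constraints one checks that the restriction of each snapshot to the relevant clock-and-work bits is one of the configurations allowed by $Y(C^{prop}_t)$, which is exactly where the structure of the unary clock encoding is used.

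The substantive half, and the step I expect to be the main obstacle, is ruling out \emph{longing} strings, since this is what forces $S$ to contain \emph{exactly} the right strings and nothing whose neighbor is missing. For the constraints whose allowed sets are all singletons (clock, auxiliary, output) there are no $C$-neighbors at all, so nothing to check. The two families with genuine $2$-element allowed sets are the random-initialization and propagation constraints, and here I use the two structural features baked into $S$. For $C^{rand}_j$, the only non-singleton allowed set $\{00,01\}$ is active exactly when the first clock bit is $0$, i.e.\ at time $0$; since $S$ contains $\unary(0)\concat y\concat 0^{a}\concat r$ for \emph{all} $r$, flipping the $j$-th random bit of any time-$0$ string keeps us in $S$, so no such string is longing. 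For $C^{prop}_t$, a string $\unary(t-1)\concat z\in S$ has as its only $C^{prop}_t$-neighbor the string $\unary(t)\concat G_t(z)$, which equals the next snapshot and hence lies in $S$; and conversely $\unary(t)\concat G_t(z)$ has neighbor $\unary(t-1)\concat z$, which lies in $S$ because $z=G_t^{-1}\bigl(G_t(z)\bigr)$ is the preceding snapshot. Thus consecutive snapshots always come in matched pairs and no propagation-longing string exists. The one point requiring care is precisely this bookkeeping: $S$ is built as the forward orbit (for the fixed $y$, over all $r$), and I must observe that reversibility of the gates $G_t$ makes it simultaneously a backward orbit, so closure holds in both directions. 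Assembling these observations yields $\setunsat(C,S)=0$ for every $C$, hence $\setunsat(\calC_x,S)=0$, and $\calC_x$ is satisfiable.
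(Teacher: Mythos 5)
Your proof is correct and follows exactly the paper's route: you construct the same witness set $S=\{\unary(t)\concat G_t\cdots G_1(y\concat 0^a\concat r): t\in[T+1],\, r\in\01^q\}$ for the perfect-completeness witness $y$ and check that it has no bad and no longing strings for each constraint family. The paper's own proof simply asserts these checks ``by construction,'' so your version is the same argument with the verification (in particular the closure under $C^{rand}_j$- and $C^{prop}_t$-neighbors via reversibility) spelled out.
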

\begin{proof}
Let $y$ be the witness that makes $C_x$ accept with probability $1$, and let 
  \[S = \left\{\unary(t) \concat G_t...G_1(y,0^{a},r) : r \in \01^q, t \in
  [T+1]\right\}.\]

By construction, the initialization, clock and propagation constraints are
  satisfied by $S$. By the assumption that the MA verification circuit accepts
  with probability $1$, the output constraints are also satisfied by $S$.
\end{proof}

Next we prove soundness.
\begin{lemma}[NO-instances lead to frustrated $\SetCSP$ instances] \label{lem:hardness-soundness}
  If $x \not\in \mathsf{L}$, then $\setunsat(\mathcal{C}_{x},S) \geq \soundness $
  for every non-empty $S \subseteq \01^n$.
\end{lemma}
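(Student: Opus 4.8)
The plan is to charge every string of an arbitrary non-empty $S$ against some source of frustration, invoking the promise only at the very end; recall that $x\notin\mathsf{L}$ means that for every witness $y$ at least $2^{q-1}$ random strings $r$ make $C_x$ reject. The starting observation is that only the two ``$2$-element'' constraint families, $C^{prop}_t$ and $C^{rand}_j$, ever create longing strings (the clock, aux and output families have singleton $Y$-sets, hence only bad strings). Declaring two strings adjacent when they are $C$-neighbors for some propagation or randomness constraint, and using reversibility of the gates $G_t$, every valid-clock string $\unary(\tau)\concat w$ sits on a unique two-way trajectory, and the randomness edges identify, at time $0$, all trajectories that share the witness and auxiliary registers. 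The connected components are therefore the history sets $H_{y,v}=\{\unary(\tau)\concat G_\tau\cdots G_1(y\concat v\concat r):\tau\in[T+1],\,r\in\01^q\}$, indexed by the witness/aux initialization $(y,v)$; the only strings of $S$ lying outside all components are the clock-bad ones.

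I would then prove a local estimate, one component at a time. Fix $H_{y,v}$, put $X=S\cap H_{y,v}$, and let $\mathrm{pl},\mathrm{rl},\mathrm{ob},\mathrm{ab}$ count the propagation-longing, randomness-longing, output-bad and aux-bad strings inside it, with $\Phi=\mathrm{pl}+\mathrm{rl}+\mathrm{ob}+\mathrm{ab}$. Think of $X$ inside the grid (time)$\times$(random string): a trajectory (fixed $r$) is full if all $T+1$ snapshots lie in $S$, and partial otherwise. Each partial trajectory has a membership switch, which yields a distinct propagation-longing string, so the partial trajectories number at most $\mathrm{pl}$ and carry mass at most $(T+1)\mathrm{pl}$; thus $|X|\le (T+1)(\mathrm{pl}+f)$, where $f$ is the number of full trajectories, and it suffices to bound $f$ by $O(\Phi)$. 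When $v\ne 0^{a(n)}$ the time-$0$ snapshot of each full trajectory is aux-bad, so $f\le\mathrm{ab}$; when $v=0^{a(n)}$, each rejecting full trajectory makes its time-$T$ snapshot output-bad, so the rejecting full ones number at most $\mathrm{ob}$.

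The accepting full trajectories of a legal component are the only delicate case, and this is exactly where the randomness constraints must earn their keep. Writing $X_0\subseteq\01^q$ for the set of $r$ whose time-$0$ snapshot is in $S$, the quantity $\mathrm{rl}$ is precisely the number of edges of the Boolean hypercube $Q_q$ leaving $X_0$, and every rejecting $r$ with time-$0$ snapshot in $S$ is full or partial, giving $|X_0\cap R_y|\le\mathrm{ob}+\mathrm{pl}$ with $|R_y|\ge 2^{q-1}$. The hard part is to show $f\le 3(\mathrm{rl}+\mathrm{ob}+\mathrm{pl})$, for which I would use the edge-isoperimetric inequality of the cube in the form ``$U$ sends at least $\min(|U|,2^q-|U|)$ edges to its complement''. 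If $|X_0|\le 2^{q-1}$ this gives $\mathrm{rl}\ge|X_0|\ge f$ directly; if $|X_0|>2^{q-1}$, the isoperimetric bound applied to the small complement, combined with $2^{q-1}-|\overline{X_0}|\le|X_0\cap R_y|\le\mathrm{ob}+\mathrm{pl}$, again forces $f=O(\mathrm{rl}+\mathrm{ob}+\mathrm{pl})$. Hence $f\le 3\Phi$ and $|X|\le 4(T+1)\Phi$, the desired per-component estimate.

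Finally I would globalize. All longing, output-bad and aux-bad strings live inside the components, so $\sum_{(y,v)}\Phi\le D:=\sum_C(|B_C|+|L_C|)$, while $\sum_{(y,v)}|X|=|S|-\#\{\text{clock-bad}\}\ge|S|-D$. Summing the per-component estimate gives $D\ge(|S|-D)/(4(T+1))$, i.e. $D\ge|S|/(4T+5)$, whence $\setunsat(\calC_x,S)=D/(m|S|)\ge 1/(5(T+1)m)$, comfortably above $\soundness$. The step I expect to be genuinely delicate is the accepting-full bound: a set that is already closed under propagation could try to sit entirely on the accepting branches and pay nothing, and it is only the randomness set-constraints, quantified through hypercube edge-isoperimetry, that make it pay; reconciling the two size regimes of $X_0$ into a single clean inequality is the crux of the whole argument.
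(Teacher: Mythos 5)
Your proof is correct, and its core ingredients coincide with the paper's: decompose $S$ into history sets determined by the initial configuration, charge clock-bad strings directly, charge histories with nonzero auxiliary bits via aux-bad or propagation-longing strings, and handle the remaining histories by combining the rejection probability of $C_x$ (output-bad strings at time $T$) with the edge expansion of the $q$-dimensional hypercube on the randomness register (randomness-longing strings at time $0$). Where you genuinely diverge is in the bookkeeping, and the divergence is to your advantage. The paper groups trajectories by the witness $y$ alone and then runs a threshold case analysis (whether at least $9/10$ of the history sets in $\textbf{H}_y$ contain their initial string, and whether $|\textbf{S}_y^{init}|$ is above or below $2^{q-1}$); you instead classify each individual trajectory as full or partial, charge each partial trajectory to a distinct propagation-longing string, and bound the number of full trajectories by $O(\Phi)$ with a single isoperimetric argument covering both size regimes of $X_0$. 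This removes the ad hoc $9/10$ split and yields the cleaner per-component inequality $|X|\le 4(T+1)\Phi$. One point you must state explicitly for the argument to close: the identification of $\mathrm{rl}$ with the hypercube edge boundary of $X_0$ holds only if $\mathrm{rl}$ counts (string, constraint) pairs, i.e.\ $\sum_j |L_{C^{rand}_j}\cap X|$, not distinct longing strings (a vertex of $X_0$ with several neighbours outside $X_0$ is longing for several $C^{rand}_j$ at once). This is legitimate, since $\setunsat(\calC_x,S)$ sums $|L_C|$ over all constraints, and it is exactly what lets you avoid the factor of $q$ that the paper loses when it converts boundary edges into distinct longing vertices; your final bound $\frac{1}{5(T+1)m}$ is accordingly stronger than the paper's $\soundness$, so the lemma follows a fortiori.
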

\begin{proof}
  Let $S \subseteq \{0,1\}^{n}$ be a non-empty set, $B$ the set of bad strings in $S$ (namely a string in $S$ which is $C$-bad for at least one set-constraint $C$) and $L$ the set of longing strings in $S$ (namely the strings in $S$ which are $C$-longing for at least one set-constraint $C$).
Our goal here is to consider a partition $\{K_i\}$ of $S$, such that 
for every $K_i$, $|K_i\cap (B\cup L)| \geq \frac{|K_i|}{10(T+1)q}$,
and from this we will show that $\setunsat(\calC_x,S) \geq
  \soundness$.

Let us start by defining $S_{ic} \subseteq S$ to be the subset of $S$ with \textbf{i}nvalid
  \textbf{c}lock register. Notice that every $x \in S_{ic}$ is bad for at least one
  clock constraint and therefore $|S_{ic} \cap B| = |S_{ic}|$. 

Now we notice that all other strings correspond to some valid clock register whose value (when read as a unary representation of some integer) is in $[T+1]$. 
Let us partition the strings in $S \setminus S_{ic}$ into disjoint sets $H_1$,...,$H_\ell$ (which indicate different history-sets to which the strings belong) as follows. 
We define the {\it initial configuration} of some string
in  $S \setminus S_{ic}$ like this. The string must be of the form $\unary(t) \concat z$ for some $t$ and $z$. Then $\initial(\unary(t) \concat z) = \unary(0) \concat G^{-1}_1...G^{-1}_t(z)$, which is the assignment of the initial bits  that leads to the configuration $z$ at the $t$'th step. We say then that two strings $s_1, s_2 \in H_i$ iff $\initial(s_1) = \initial(s_2)$ and we abuse the notation and call $\initial(H_i) = \initial(s_1)$. Notice that for $i \ne j$, $H_i$ and $H_j$ are disjoint because the computation is reversible; thus the $H_i$'s constitute a partition of $S\setminus S_{ic}$. 
  Notice also that each $H_i$ contains at most
  $T+1$ strings, and the different strings in each $H_i$ have  different values of the clock register. 
We call these $H_i$ {\it history-sets} for the reason that they correspond to a correct propagation of the computation of the circuit $C_x$ for some initialization of all its bits.\footnote{
More precisely, $H_i$ is a subset of the correct propagation because it involves only the snapshots in $S$. 
}   

Let us first consider the history-sets whose  initial configuration is not valid, i.e., it contains \textbf{i}nvalid (or non-zero) \textbf{a}uxiliary bits: 
$\textbf{H}^{ia} = \{H _i :
\initial(H_i) = 0^T \concat y \concat z \concat r\text{ for some } z \ne 0^a\}$. 
We note that for any $H_i \in \textbf{H}^{ia}$, $\initial(H_i)$ is a $C^{aux}_j$-bad string in $H_i$ for some $j$.
If this string is in 
$H_i$, then we indeed have 
$|H_i \cap (B\cup L)|\geq |H_i \cap B| \geq 1
  \geq \frac{|H_i|}{T+1}$.
  However, if $H_i$ does not 
  contain its initial string, 
  then consider the minimal 
  $t$ such that $\unary(t)\concat z\in H_i$ for some $z$, and by assumption we have $t>0$.  This means that  the string 
  $\unary(t) \concat z$
 is a $C^{prop}_t$-longing string, because it is a neighbor of 
    $\unary(t-1) \concat G_t^{-1}(z) \not\in S$.
Hence, for such $H_i$ we have  
$|H_i \cap (B\cup L)|\geq |H_i \cap L| \geq 1
  \geq \frac{|H_i|}{T+1}$. This completes handling all the strings
in $S$ within a history set $H_i$ with invalid 
auxiliary bits in $\initial(H_i)$.  

We now need to consider 
history sets in $\{H_i\} \setminus \textbf{H}^{ia}$, namely, the history sets whose initial string is of the form 
$0^T \concat y \concat 0^a \concat r $ for some value of $y$ and $r$. 
Let us group these history sets 
according to $y$, the value of the witness register.
In other words, let us consider the sets of history sets: 
  $\textbf{H}_y = \{H_i : H_i\text{'s initial string is of the form } 0^T \concat y \concat 0^a \concat r \}$. 
We fix now some $y$ and the following arguments hold for each $y$ separately. Notice that each $H_i \in \textbf{H}_y$ corresponds to a computation
  corresponding to a different initial random string for the witness $y$. 
Let us denote the union of strings in all sets in $\textbf{H}_y$ by $\textbf{S}_y = \bigcup_{H_i \in \textbf{H}_y} \{s|s\in H_i\}$. 
To finish handling all strings, we need to provide a bound 
 on $|\textbf{S}_y \cap (L\cup B)|$ for all witnesses~$y$.

To proceed, we need some additional notation. We note that $\textbf{H}_y$ can be written as the union of the history sets which contain their initial string, denoted 
$\textbf{H}_y^{start}$, and the rest, denoted 
 $\textbf{H}_y^{nostart}$. 
 We proceed by considering two cases separately: 
 
\begin{enumerate} 
\item 
Let us first consider the simpler case in which  $|\textbf{H}_y^{nostart}|>
\frac{|\textbf{H}_y|}{10}$. 
As above, we have that each $H_i \in
\textbf{H}_y^{nostart}$ has a longing string, and therefore we have that 
  $|\textbf{S}_y \cap (L\cup B)| \geq |\textbf{S}_y \cap L| \geq |\textbf{H}_y^{nostart}| > \frac{|\textbf{H}_y|}{10}\ge \frac{|\textbf{S}_y|}{10(T+1)}$. The last inequality is due to the fact that 
  each $H_i\in\textbf{H}_y$ contains at most $T+1$
  strings. This finishes the treatment of 
  all strings in $S$, in the case 
  $|\textbf{H}_y^{nostart}| >
\frac{|\textbf{H}_y|}{10}$.
  
\item  
In the second case 
$|\textbf{H}_y^{start}|\ge
\frac{9|\textbf{H}_y|}{10}$. 
 We further denote  
$\textbf{S}_y^{init} = \{s | s = initial(H_i), H_i \in \textbf{H}_y^{start} \}$ as the set of the initial strings of each $H_i \in \textbf{H}_y^{start}$. 
Again (and for the last time) there are two cases. 
\begin{enumerate} 
\item First, let us consider the case when  $|\textbf{S}^{init}_y| \geq 2^{q-1}$. This means
that for this fixed $y$, for most values $r$ of the random bits, the initial string $0^T\concat y \concat 0^a \concat r$ of the history set corresponding to this $y$ and $r$
is present in $S_y$.  
We use the facts that $x\notin \mathsf{L}$, and that at least $2/3$ of the history sets must lead to rejection by \Cref{def:MA}. From these observations, we will conclude that there will be either  many bad strings due to the final accept 
constraint $C^{out}$, or many longing strings.

 Let  $\textbf{Acc}_y
  = \{H_i \in \textbf{H}_y : 1^T \concat z \in H_i \text{ and } z = 1\concat z' \}$ be the set of
  history sets in $\textbf{H}_y$ that accept in the last step. We have that $|\textbf{Acc}_y|$ is at most the
  number of $r\in \{0,1\}^q$ which leads the circuit $C_x$ to accept the witness $y$; since $x \not\in \mathsf{L}$, we have that the 
  probability to accept for any $y$ is most $1/3$. Hence 
  $|\textbf{Acc}_y|\le \frac{2^q}{3} \leq 2\frac{|\textbf{H}_y|}{3}\leq \frac{20|\textbf{H}^{start}_y|}{27}$, where we used the fact that
  $2^{q-1} \leq  |\textbf{S}_y^{init}| = |\textbf{H}^{start}_y| \leq |\textbf{H}_y|$, and in the last inequality, the fact that we are in the case $|\textbf{H}_y^{start}|\geq 9|\textbf{H}_y|/10$. 
  Hence, there are at least 
  $\frac{2^q}{10}$ history sets 
  in $\textbf{H}^{start}_y$ which do not end in accept. Such $H_i$
   either contains the string $1^T \concat 0 \concat z$ (which is bad for $C^{out}$), or does not contain a final state at all, namely does not contain a state of the form $1^T\concat z$ for some $z$, resulting in a longing string.
   Thus, there are at least 
   $\frac{2^q}{10}$ strings in 
   $|\textbf{S}_y \cap (B\cup L)|$; 
   and since $|\textbf{S}_y| \leq (T+1)|\textbf{H}_y|$ and $|\textbf{H}_y| \leq 2^q$, resulting in $|\textbf{S}_y| \leq 2^q(T+1)$, 
   we have that
   $|\textbf{S}_y \cap (B\cup L)|
   \ge \frac{2^q}{10}\ge \frac{|\textbf{S}_y|}{10(T+1)}.$

\item Finally, let us consider the case where  $|\textbf{S}_y^{init}| \leq 2^{q-1}$.
This is where we will need to apply conductance 
arguments. 
Let $G^{0}_{y}$  be the subgraph of $G_{\mathcal{C}}$\footnote{Here, we use the same notation as \Cref{S:ma}.} induced by the vertices $R_y = \{0^T\concat y \concat 0^a \concat r : r \in \{0,1\}^q\}$. Notice that $G^{0}_{y}$ is isomorphic to the $q$-dimensional hypercube. This is true because the only remaining edges on $G^{0}_y$ come from the set-constraints $C_j^{rand}$.
Notice also that $\textbf{S}_y^{init}$ is a subset of the vertices of $G^{0}_{y}$. We now apply \Cref{lem:expansion-hypercube} which states 
that the conductance of the $q$-dimensional hypercube is $\frac{1}{q}$. Applying this lemma to the graph 
$G^{0}_{y}$ and the subset of its vertices, $\textbf{S}_y^{init}$ we conclude that  
$\frac{|\partial_{G^0_y}(\textbf{S}_y^{init})|}{q|\textbf{S}_y^{init}|}\ge \frac{1}{q}$, where we have used the fact that 
all vertices in $G^{0}_{y}$
have the same degree, $q$, and the fact that 
we are now considering the case $|\textbf{S}_y^{init}| \leq 2^{q-1}$.
We can conclude then that there exists at least $|\textbf{S}_y^{init}|$ edges in the cut $(\textbf{S}_y^{init}, R_y \setminus \textbf{S}_y^{init})$. Since each vertex in $G^{0}_{y}$ (in particular, each vertex in $\textbf{S}_y^{init}$) has $q$ neighbors, it means that there exists at least $\frac{|\textbf{S}_y^{init}|}{q}$ longing strings in $\textbf{S}_y^{init}$. We conclude this case by noticing that 
\[|\textbf{S}_y\cap(L\cup B)| \geq  |\textbf{S}_y\cap L|\geq \frac{|\textbf{S}_y^{init}|}{q} = \frac{|\textbf{H}^{start}_y|}{q} \geq \frac{9|\textbf{H}_y|}{10q} \geq  \frac{9|\textbf{S}_y|}{10q(T+1)}, \] 
where in the second inequality we use the fact that $\textbf{S}_y^{init} \subseteq \textbf{S}_y$ and there are at least $\frac{|\textbf{S}_y^{init}|}{q}$ longing strings in $\textbf{S}_y^{init}$, the equality follows since  $|\textbf{H}_y^{start}| = |\textbf{S}_y^{init}|$, the third inequality follows from our assumption that  $|\textbf{H}_y^{start}|\ge
\frac{9|\textbf{H}_y|}{10}$  and finally we have that  $|H_i| \leq T+1$  (and therefore $|\textbf{H}_y| \geq \frac{|\textbf{S}_y|}{T+1}$).
\end{enumerate}
\end{enumerate} 
  
  To finish the proof, notice that $S = S_{ic} \cup \bigcup_{H \in
  \textbf{H}^{ia}} H \cup \bigcup_{y} \mathbf{S}_y$. Since each of these subsets has at least a $\frac{1}{10(T+1)q}$-fraction of bad strings or longing strings, we have that $|B \cup L| \geq \frac{|S|}{10(T+1)q}$. It follows that  
  \[
   \setunsat(\calC_x,S) = \frac{1}{m} \sum_{i} \left( \frac{|B_{C_i}|}{|S|} + \frac{|L_{C_i}|}{|S|} \right) \geq  \frac{|B|}{m|S|} + \frac{|L|}{m|S|}
   \geq \frac{|B\cup L|}{m|S|} \geq   \frac{1}{10(T+1)q m},
  \]
  finishing the proof.
\end{proof}

From the two previous lemmas, we have the following.

\newtheorem*{thm:repeat-hardness}{\Cref{cor:ma-hardness-setcsp} (restated)}
\begin{thm:repeat-hardness}
There exists some inverse polynomial $p(x) = \Theta(1/x^3)$ such that for every inverse polynomial $p' < p$, the problem $\SetCSP_{p'(m)}$ is $\MA$-hard.
\end{thm:repeat-hardness}
\begin{proof}
  It follows directly from \Cref{lem:hardness-completeness,}, together with the fact that, regarding the parameters in \Cref{lem:hardness-soundness}, we have that $T,q \leq m$. 
\end{proof}

\section{Reduction from $\SetCSP$ to $\ACleanCC$}
\label{sec:reduction}
In this section we reduce the $\SetCSP$ problem to the $\ACleanCC$, showing the containment of $\SetCSP$ in $\MA$ and the $\MA$-hardness of $\ACleanCC$.

Before showing the reduction, we prove a technical lemma that shows how, for a fixed $S$, the value of $\setunsat(\mathcal{C},S)$ and the number of bad and longing strings for $\mathcal{C}$ are related.

\begin{lemma}\label{lem:bound-setunsat}
For some fixed  non-empty $S \subset \01^n$, let $B_{\mathcal{C}} = \bigcup_{i}B_{C_i}$ and $L_{\mathcal{C}} = \bigcup_{i}L_{C_i}$, the union of bad and longing strings for all set-constraints in $\mathcal{C}$, respectively. We have that $\setunsat(\mathcal{C},S) \leq \frac{1}{|S|}( |B_{\mathcal{C}}| + |L_{\mathcal{C}}|$).
\end{lemma}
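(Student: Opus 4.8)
The plan is to expand both sides using the definition of $\setunsat$ and reduce the claim to a pair of elementary double-counting inequalities. Starting from \Cref{unsatave,unsat}, I would write
\[
\setunsat(\calC,S) = \frac{1}{m}\sum_{i=1}^m \left(\frac{|B_{C_i}|}{|S|} + \frac{|L_{C_i}|}{|S|}\right) = \frac{1}{m|S|}\left(\sum_{i=1}^m |B_{C_i}| + \sum_{i=1}^m |L_{C_i}|\right),
\]
so that the entire statement reduces to controlling the two sums $\sum_i |B_{C_i}|$ and $\sum_i |L_{C_i}|$ in terms of the cardinalities of the unions $B_{\calC} = \bigcup_i B_{C_i}$ and $L_{\calC} = \bigcup_i L_{C_i}$.

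The key observation is a double-counting bound that exploits the averaging factor $1/m$. Rewriting $\sum_i |B_{C_i}| = \sum_{x\in S} |\{i : x \in B_{C_i}\}|$, I would note that any $x$ contributing to this sum lies in $B_{\calC}$, so the sum is supported on $B_{\calC}$, and that each such $x$ is counted at most once per set-constraint, hence at most $m$ times in total. This yields $\sum_i |B_{C_i}| \le m\,|B_{\calC}|$, and the identical argument gives $\sum_i |L_{C_i}| \le m\,|L_{\calC}|$.

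Substituting these two inequalities into the displayed identity, the factors of $m$ cancel and we obtain
\[
\setunsat(\calC,S) \le \frac{1}{m|S|}\bigl(m|B_{\calC}| + m|L_{\calC}|\bigr) = \frac{1}{|S|}\bigl(|B_{\calC}| + |L_{\calC}|\bigr),
\]
as claimed. No serious obstacle is expected here: the only subtlety worth flagging is that $B_{\calC}$ and $L_{\calC}$ need not be disjoint, since a string may be $C_i$-bad for one set-constraint and $C_j$-longing for another, but this is harmless because the bound merely adds the two cardinalities rather than requiring a disjoint union. The crucial point is simply that the $1/m$ normalization in the definition of $\setunsat(\calC,S)$ exactly compensates for the overcounting that arises when a single string violates several set-constraints simultaneously.
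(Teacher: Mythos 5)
Your proof is correct and is essentially the paper's own argument: the paper bounds each term via $|B_{C_i}| \leq |B_{\mathcal{C}}|$ and $|L_{C_i}| \leq |L_{\mathcal{C}}|$ (from $B_{C_i} \subseteq B_{\mathcal{C}}$, $L_{C_i} \subseteq L_{\mathcal{C}}$) and sums over the $m$ constraints, which is exactly what your double-counting bound $\sum_i |B_{C_i}| \le m\,|B_{\mathcal{C}}|$ amounts to. Your remark that $B_{\mathcal{C}}$ and $L_{\mathcal{C}}$ need not be disjoint is a fine observation but, as you note, irrelevant to the stated bound.
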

\begin{proof}
By definition of $\setunsat(\mathcal{C},S)$ and $\setunsat(C_i,S)$, we have that
\begin{align*}
\setunsat(\mathcal{C},S) &= \frac{1}{m}\sum_{i = 1}^m \setunsat(C_i,S) =
\frac{1}{m |S|}\sum_{i = 1}^m |B_{C_i}| + |L_{C_i}| \leq 
\frac{1}{m|S|}\sum_{i = 1}^m |B_{\mathcal{C}}| + |L_{\mathcal{C}}| \\
&= \frac{1}{|S|}(|B_{\mathcal{C}}| + |L_{\mathcal{C}}|),
\end{align*}
where in the inequality we use the fact that $B_{C_i} \subseteq B_{\mathcal{C}}$ and $L_{C_i} \subseteq L_{\mathcal{C}}$.
\end{proof}

For some inverse polynomial $\eps$, we consider an instance $\calC$ of
$\kSetCSP_{\eps}$. From $\calC$, we construct the graph $G_{\calC} = (\01^n,
E)$, where $(x,y) \in E$ if there exists a set-constraint $C \in
\calC$ such that $x$ and $y$ are $C$-neighbors. We can define $C_{G_{\calC}}$ that on input $x \in \01^n$, outputs all neighbors of $x$ by inspecting all set-constraints of $\calC$. Finally, we define $C_M$ as the circuit that on input $x \in \01^n$, outputs if $x$ is a bad string for $\calC$, again by inspecting all of its set-constraints.

\begin{lemma}[Reduction from $\SetCSP$ to $\ACleanCC$]
  \label{lem:completeness-ma}
  For every $\eps$ we have that:
  \begin{itemize}
      \item If $\mathcal{C} = (C_1,...,C_m)$ is a yes-instance of $\kSetCSP_{\eps}$, then $(C_{G_{\calC}},C_M)$ is a yes-instance of $\ACleanCC_{\eps/2}$. 
      \item If 
$\mathcal{C} = (C_1,...,C_m)$ is a no-instance of $\kSetCSP_{\eps}$, then $(C_{G_{\calC}},C_M)$ is a no-instance of $\ACleanCC_{\eps/2}$.
  \end{itemize}
\end{lemma}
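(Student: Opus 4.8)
The goal is to show that the graph-marking reduction preserves the yes/no classification, with the expansion parameter halved. Both directions amount to carefully relating the two notions of ``far from satisfaction'': the $\setunsat$ value on the \SetCSP{} side, and the expansion-or-marked-fraction condition on the \ACleanCC{} side. The key structural fact I would exploit is that the edges of $G_{\calC}$ are exactly the $C$-neighbor pairs, and that the marked vertices $C_M^{-1}(1)$ are exactly the $C$-bad strings; hence for any vertex set $S$, its marked vertices are $B_{\calC} \cap S$ and its longing strings (those with a neighbor outside $S$) are essentially the endpoints inside $S$ of boundary edges. This lets me translate $|B_\calC|$ and $|L_\calC|$ directly into graph-theoretic quantities.

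\textbf{Yes-instances.}
First I would handle the completeness direction. If $\calC$ is a yes-instance, there is a nonempty $S$ with $\setunsat(\calC,S)=0$, so $S$ contains no bad strings and no longing strings. No bad strings means $C_M$ outputs $0$ on every vertex of $S$, i.e.\ $S$ is entirely unmarked. No longing strings means no string in $S$ is a $C$-neighbor of any string outside $S$; by definition of the edges of $G_{\calC}$, this says $S$ has empty boundary, so $S$ is a union of connected components of $G_{\calC}$. Any single connected component contained in $S$ is then a nonempty clean connected component, witnessing that $(C_{G_\calC},C_M)$ is a yes-instance of $\ACleanCC_{\eps/2}$.

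\textbf{No-instances.}
This is the direction I expect to require the real work, and it is where the factor-of-two loss enters. Suppose $(C_{G_\calC},C_M)$ is \emph{not} a no-instance of $\ACleanCC_{\eps/2}$; I would derive that $\calC$ is not a no-instance of $\kSetCSP_\eps$, i.e.\ exhibit $S$ with $\setunsat(\calC,S)<\eps$. The negation of the \ACleanCC{} no-condition gives a nonempty $S$ with both $|\partial_{G_\calC}(S)| < \tfrac{\eps}{2}|S|$ and $|\{x\in S: C_M(x)=1\}| < \tfrac{\eps}{2}|S|$. The second bound directly caps the bad strings: $|B_\calC \cap S| < \tfrac{\eps}{2}|S|$. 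For the longing strings I would observe that every longing string in $S$ is the interior endpoint of at least one boundary edge, so the number of longing strings is at most $|\partial_{G_\calC}(S)| < \tfrac{\eps}{2}|S|$. Combining these with \Cref{lem:bound-setunsat} yields
\[
\setunsat(\calC,S) \leq \frac{|B_\calC| + |L_\calC|}{|S|} < \frac{\eps}{2} + \frac{\eps}{2} = \eps,
\]
so $S$ certifies that $\calC$ is not $\eps$-frustrated. Taking the contrapositive gives exactly the claim: a no-instance of $\kSetCSP_\eps$ maps to a no-instance of $\ACleanCC_{\eps/2}$.

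\textbf{Main obstacle.}
The delicate point is the inequality $|L_\calC \cap S| \le |\partial_{G_\calC}(S)|$, which requires being careful that ``longing'' is defined relative to \emph{some} set-constraint and that each such longing string indeed contributes a boundary edge; here the bounded degree and the exact correspondence between $C$-neighbor relations and edges of $G_{\calC}$ are what make the counting go through without additional multiplicative factors. I would make sure the boundary counts edges (not vertices) and that $|L_\calC|$ denotes the union over constraints, so that a single longing string counted once on the left matches at least one distinct incident boundary edge on the right.
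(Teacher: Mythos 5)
Your proposal is correct and follows essentially the same route as the paper: the same completeness argument (no bad strings means unmarked, no longing strings means empty boundary, hence a clean connected component), and the same contrapositive for soundness, bounding bad strings by marked vertices and longing strings by boundary edges, then invoking \Cref{lem:bound-setunsat}. No substantive differences.
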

\begin{proof}
To prove the first part,   we show that a non-empty $S \subseteq \01^n$ such that $\setunsat(S,\calC)
  =0$ implies that the connected component of any string $x\in S$ in $G_{\calC}$ contains only good strings.
  To show this, we notice that $S$ is a union of connected components of $G_{\calC}$. In this case, 
  any of these connected components imply that $(C_{G_{\calC}},C_M)$ is a yes-instance of $\ACleanCC$.
  
  Suppose towards contradiction that there exists a string $x$ in $S$ which is connected 
  to a string $y$ outside of $S$ via an edge in $G_{\calC}$; that means that $x$ and $y$ are $C$-neighbors for some set-constraint $C$. But this means that $x$ 
  is a $C$-longing string for $S$, and this contradicts 
  $\setunsat(S,\calC)
  =0$. We finish this part of the proof by stressing that, by assumption, no elements in $S$ are marked, otherwise there would be a bad string in it.
  
  \medskip
  
  For the second part, we show that if there is a set of vertices $S$ such that $\partial(S,\overline{S}) < \eps|S|/2$ on $G_{\mathcal{C}}$ and the number of marked elements in $S$ is strictly less than $\eps|S|/2$, then $\setunsat(S,\calC) < \eps$. In this case, if $\mathcal{C}$ is a no-instance of $\kSetCSP_{\eps}$, then $(C_{G_{\mathcal{C}}},C_M)$ must be a no-instance of $\ACleanCC_{\eps/2}$.
  
  Notice that if there are at most 
   $\eps|S|/2$ edges between $S$ and $\overline{S}$, then there are at most $\eps|S|/2$ vertices in $S$ that are connected to $\overline{S}$ and, by definition of the edges in $G_{\mathcal{C}}$, we have that $S$ has at most $\eps|S|/2$ $\mathcal{C}$-longing strings.
We also have that the number of bad strings in $S$ is, by definition, the number of marked elements which is also strictly less than $\eps|S|/2$. Therefore, by \Cref{lem:bound-setunsat}, we have that
$\setunsat(\calC ,S)  < \eps$.
\end{proof}

We can now finally prove \Cref{thm:main}: 

\newtheorem*{thm:repeat-main}{\Cref{thm:main} (restated)}
\begin{thm:repeat-main}
There exists some inverse polynomial $p(x) = \Theta(1/x^3)$ such that for every inverse polynomial $p' < p$, the problems $\SetCSP_{p'(m)}$ and $\ACleanCC_{p'(m)}$ are $\MA$-complete, where $m$ is the size of the $\SetCSP$ or $\ACleanCC$ instance.
\end{thm:repeat-main}
\begin{proof}
From \Cref{cor:ma-hardness-setcsp} we have that for some 
$\tilde{p} = \Theta(1/x^3)$,
$\SetCSP_{\tilde{p}}$ is $\MA$-hard\footnote{Notice that we slightly abuse the notation here: in \Cref{cor:ma-hardness-setcsp}, we define the hardness in respect to the parameter $m$, the number of clauses; here, we call $m$ the size of the $\SetCSP$ instance, which is lower-bounded by its number of clauses.} and from \Cref{cor:containment} we have that $\ACleanCC_{\tilde{p}/2}$ is in $\MA$.

In \Cref{lem:completeness-ma}, we show a reduction $\SetCSP_{\tilde{p}}$ to $\ACleanCC_{\tilde{p}/2}$, which implies, together with the aforementioned results, that $\SetCSP_{\tilde{p}}$ is in $\MA$ and that  $\ACleanCC_{\tilde{p}/2}$
is $\MA$-hard. Therefore, we can pick $p(x) = \tilde{p}(x)/2$ and our statement holds.
\end{proof}

\section*{Acknowledgements} 
 We notice that the ACAC problem did not appear in the first version of the this work and we thank an anonymous reviewer who pushed us to define more natural problems.
We are grateful to Umesh Vazirani for asking the right question that led to this note. 
D.A. is grateful for the support of ISF grant 1721/17.
Most of this work was done while A.G. was affiliated to CWI and QuSoft, and part of it was done while A.G. was visiting the Simons Institute for the Theory of Computing.

\bibliographystyle{alpha}
\bibliography{stoquastic}

\appendix

\section{Complexity classes}\label{ap:classes} 

\defclass{\NP}{def:NP}{
A problem $A=(\ayes,\ano)$ is in \NP{} if and only if there exist two polynomials
  $p,q$ and a deterministic algorithm $D$, where $D$
  takes as input a string $x\in\Sigma^*$ and a $p(|x|)$-bit witness $y$, runs in time $q(|x|)$ and outputs a bit $1$ (accept) or $0$ (reject)
  such that:
}
{If $x\in\ayes$, then there exists a witness $y$
    such that $D$ outputs accept on $(x,y)$.}
{If $x\in\ano$, then for any witness $y$, $D$ outputs 
    reject on $(x,y)$.}

We can then generalize this notion, by giving the verification algorithm the power to flip random coins, leading to the complexity class \MA{}.

\defclass{\MA}{def:MA}{
A (promise) problem $A=(\ayes,\ano)$ is in \MA{} if and only if there exist two polynomials
  $p,q$ and 
 a probabilistic algorithm $R$, where $R$
takes as input a string $x\in\Sigma^*$ and a $p(|x|)$-bit witness $y$, runs in time $q(|x|)$ and decides
on acceptance or rejection of $x$
such that:  
}
{If $x\in\ayes$, there exists a witness $y$
    such that $R$ accepts $(x,y)$ with probability $1$.}
{If $x\in\ano$, for any witness $y$, $R$
    accepts $(x,y)$ with probability at most $\frac{1}{3}$.}

\section{Reversible Circuits} 
\label{ap:rev}

In classical complexity theory, we usually describe the verification procedure for
NP and \MA{} as algorithms. Such polynomial time algorithms can be converted into a uniform family of
polynomial-size Boolean circuits, which are usually described using 
$\AND$, $\OR$ and $\NOT$
gates~\cite{AroraBarak09}. As we will see later, these circuits are not suitable in our context
because we need our circuits to be {\em reversible}, namely made of reversible gates. By this we mean that the output string of the gate is a one-to-one function of the input string, and thus an inverse gate exists. 
Clearly, we can also invert the entire circuit made of such gates by running 
all inverses of the gates in reverse order. 

Boolean circuits consisting of the {\em universal} gateset
 $\{\NOT, \CNOT, 
\CCNOT\}$,\footnote{$\NOT$ is the $1$-bit gate that on input $a$ outputs $1\oplus a$; $\CNOT$ is the $2$-bit gate that on input $(a,b)$ outputs $(a,b \oplus a)$; finally
$\CCNOT$  is the $3$-bit gate that on input $(a,b,c)$ outputs $(a,b,c\oplus
ac)$, i.e., it flips the last bit iff $a = b = 1$. Notice that 
$\{\NOT,\CCNOT\}$ is already universal for classical computation and we just add $\CNOT$ for convenience.}
 can be made reversible, by making use of additional auxiliary bits
initialized to $0$. It is folklore that the overhead for this conversion is linear in the number of gates.  
We will assume all verifiers considered are 
reversible Boolean circuits. 

For randomized circuits, we can also assume the circuit uses only reversible gates, in the following way: we assume
that the random bits are provided as part of the initial string on which the circuit acts; we can view it as part of the input. The rest of the circuit is reversible. 

For decision problems, we use the convention that the first bit is used as the output of the circuit, namely, \textsf{accept} or \textsf{reject}.

\section{Graph theory and random walks}
\label{ap:graph}
For some undirected graph $G = (V,E)$, and $v \in V$, let $d_G(v) = |\{u \in V : \{v,u\}
\in E \}|$ be the degree of $v$. 
We may define a weight function for the edges $w : E \rightarrow \mathbb{R}^+$,
and if it is not explicitly defined, we assume that $w(\{u,v\}) = 1$ for all
$\{u,v\} \in E$. We slightly abuse the notation and for some $E' \subseteq E$, we denote
$w(E') = \sum_{e \in E'} w(e)$.

For some set of vertices $S \subseteq V$, we
define the boundary of $S$ as
$\partial_G(S) = \{\{u,v\} \in E: u \in S, v \not\in S\}$, the neighbors of $S$
as $N_G(S) = \{u \in \overline{S} : v \in S, \{u,v\} \in E\}$ 
(where $N_G(u)$ for some $u\in V$ is shorthand for $N_G(\{u\})$); and finally 
the volume of $S$
$vol_G(S) = \sum_{u \in S, v \in N_G(u)} w(\{u,v\})$ and the
conductance of $S$, $\phi_G(S) = \frac{w(\partial_G(S))}{vol_G(S)}$. 

We define notions related to random walks on weighted graphs; 
definitions 
are taken from \cite{GharanT12}.
One step in the random walk on the weighted graph $G$ starting on vertex $v$ is defined by moving to vertex $u$ with probability $p_{v,u} = \begin{cases}\frac{w(\{u,v\})}{vol_G(\{v\})}, & \text{if } \{u,v\}  \in E \\ 0 ,& \text{otherwise} \end{cases}$. One step in the {\em lazy} random walk on $G$ starting on $v$ is defined by staying in the same vertex with probability $\frac{1}{2}$, or with probability $\frac{1}{2}$, moving according to the above  random walk.
The
conductance of the weighted graph is defined as \[\phi(G) = \min_{\substack{S \subseteq V
\\
vol_G(S)
\leq \frac{vol_G(V)}{2}}}\phi_G(S).\]

The (unique) stationary distribution $\pi_G$ of the lazy random walk on a connected weighted graph $G$
is given by the following probability distribution on the vertices: $\pi_G(v) =
\frac{vol_G(\{v\})}{vol_G(V)}$.  The following lemma connects the conductance 
of a graph and its {\em mixing time}, i.e., the number of steps of a lazy random
walk on $G$ starting at an arbitrary vertex such that the distribution on the vertices is close to the stationary distribution.

\begin{lemma}[High conductance implies small mixing time~\cite{JerrumS89}]\label{lemma:mixing-time}
 \ Let $G = (V,E)$ be a undirected connected graph. For any $u,v \in V$, a
  $\left(\frac{2}{\phi(G)^2/4}\ln\frac{1}{\eps \min\{\pi_G(v)\}}\right)$-step lazy-random walk starting
  in $u$ ends in $v$ with probability $p^T_{u,v} \in [\pi_G(v)-\eps,\pi_G(v)+\eps]$.
 \footnote{
We use Corollary $2.2$ of \cite{JerrumS89}, along with the
  remark $(a)$ after it. The latter states that the conductance of a graph is divided by $2$ when we consider the weighted graph induced by the {\em lazy} random walk (this induced graph is naturally defined by adding a self loop to each vertex, and giving it the same weight as the sum of the weight of all other edges going out of the node). In \Cref{lemma:mixing-time}, we use the conductance of the original graph $G$ and account for the fact that we are applying a lazy random walk on it by dividing the conductance by $2$. Importantly, the stationary distribution remains the same.}
\end{lemma}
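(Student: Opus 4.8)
The plan is to derive the bound from the standard conductance--mixing connection for reversible chains, adapted to the laziness. First I would set up the spectral picture: since $G$ is connected and the walk is lazy, its transition operator is reversible with respect to $\pi_G$ and, crucially, all of its eigenvalues lie in $[0,1]$, so only the second-largest eigenvalue $\lambda_2$ controls convergence (the laziness is precisely what removes the potentially problematic negative eigenvalues). Diagonalizing the symmetrized transition matrix $D^{1/2}PD^{-1/2}$ with $D=\mathrm{diag}(\pi_G)$ then yields the textbook pointwise bound
\[
\left| p^T_{u,v} - \pi_G(v) \right| \;\le\; \sqrt{\frac{\pi_G(v)}{\pi_G(u)}}\;\lambda_2^{\,T} \;\le\; \frac{\lambda_2^{\,T}}{\min_w \pi_G(w)},
\]
so that taking $T \ge \frac{1}{1-\lambda_2}\ln\frac{1}{\eps\,\min_w \pi_G(w)}$ (using $\lambda_2 \le e^{-(1-\lambda_2)}$) forces $p^T_{u,v} \in [\pi_G(v)-\eps,\,\pi_G(v)+\eps]$.

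Next I would lower bound the spectral gap $1-\lambda_2$ by the conductance via a discrete Cheeger inequality of the form $1-\lambda_2 \ge \phi_{\mathrm{lazy}}^2/2$, where $\phi_{\mathrm{lazy}}$ is the conductance of the lazy chain. The only bookkeeping here is the laziness adjustment: making the walk lazy amounts to adding to each vertex a self-loop whose weight equals the total weight of its incident edges, which leaves $\pi_G$ unchanged and preserves all cut weights but doubles every volume, hence $\phi_{\mathrm{lazy}} = \phi(G)/2$. Substituting gives $1-\lambda_2 \ge \phi(G)^2/8$, and plugging this into the step count above reproduces exactly the stated $\frac{2}{\phi(G)^2/4}\ln\frac{1}{\eps\,\min_w \pi_G(w)} = \frac{8}{\phi(G)^2}\ln\frac{1}{\eps\,\min_w \pi_G(w)}$.

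Alternatively, and more in line with how the result is invoked here, I would simply cite Corollary $2.2$ of \cite{JerrumS89}, which already packages both steps into a mixing bound stated directly in terms of conductance; then the entire content of the proof collapses to the laziness adjustment (the factor of $2$ in the conductance), exactly as recorded in the footnote to the statement. The main obstacle is not conceptual depth but constant-tracking: one must get the direction and constant in Cheeger's inequality right, correctly fold the $\sqrt{\pi_G(v)/\pi_G(u)}$ prefactor into $1/\min_w \pi_G(w)$, and account for the laziness consistently in \emph{both} the eigenvalue range and the conductance, so that the final expression matches the claimed form.
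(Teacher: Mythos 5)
Your proposal is correct. Note first that the paper gives no proof of this lemma at all: it is invoked verbatim as Corollary~2.2 of \cite{JerrumS89} together with remark~(a) following it, so your second, ``alternative'' route --- cite the corollary and do only the laziness bookkeeping $\phi \mapsto \phi(G)/2$ --- is precisely what the paper does, and your accounting there (self-loops of weight equal to the vertex volume preserve $\pi_G$ and all cut weights but double every volume, hence halve the conductance) matches the footnote. Your first route is a genuinely self-contained reconstruction that the paper omits, and its constants check out: laziness puts all eigenvalues in $[0,1]$ so $\lambda_{\max}=\lambda_2$; the spectral pointwise bound $\lvert p^T_{u,v}-\pi_G(v)\rvert \le \sqrt{\pi_G(v)/\pi_G(u)}\,\lambda_2^T \le \lambda_2^T/\min_w\pi_G(w)$ is standard; and Cheeger for the lazy chain gives $1-\lambda_2 \ge \phi_{\mathrm{lazy}}^2/2 = \phi(G)^2/8$, so $T=\frac{8}{\phi(G)^2}\ln\frac{1}{\eps\min_w\pi_G(w)}$ suffices, which is exactly $\frac{2}{\phi(G)^2/4}\ln\frac{1}{\eps\min_w\pi_G(w)}$. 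The only remark worth recording is that Jerrum--Sinclair's Corollary~2.2 actually controls the \emph{relative} pointwise distance $\lvert p^T_{u,v}-\pi_G(v)\rvert/\pi_G(v)$, which is stronger than the additive guarantee stated in the lemma; either reading of $\eps$ is therefore fine, and your derivation delivers the additive form that the paper actually uses downstream.
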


We now prove  \Cref{lem:hitting-time}
\newtheorem*{thm:repeat-hitting}{\Cref{lem:hitting-time} (restated)}
\begin{thm:repeat-hitting}[Escaping time of high conductance subset]
  Let $G = (V =A\cup B,E)$ be a simple (no multiple edges) 
  undirected connected graph such that for
  every $v \in A$
  $d_G(v) \leq d$ , and such that for some $\delta < \frac{1}{2}$, 
  for all $A' \subseteq A$, we have that 
  $|\partial_G(A')| \geq \delta |A'|$.
Then  a $\left( 
   \stepsrandomwalk
  \right)$-step lazy random walk starting in any $v \in A$ reaches some vertex $u
  \in B$ with probability at least $\frac{\delta}{4d}$.
\end{thm:repeat-hitting}
\begin{proof}
  Let us consider the graph $G' = (V',E')$ where $V' = A \cup \{b^*\}$ and $E' =
  \{ \{u,v\} \in E: u,v\in A\} \cup \{\{u,b^*\} : u \in A, v \in B, \{u,v\} \in
  E\}$ (as a set, duplicate edges are removed).  In other words, $G'$ can be seen as the graph derived when we contract all
  vertices in $B$ to one vertex (and call this vertex $b^*$) and contract the parallel edges that appear when we perform the contraction of the vertices
  into a single edge. For some $u \in A$, we define the 
  weight of the edge $\{u,v\} \in E'$ as 
  \[w(\{u,v\}) = \begin{cases} 1, &
  \text{if } v \in A \\
  \left|\left\{\{u,v'\} : \{u,v'\}\in E, v' \in B\right\}\right|,&\text{if } v =
  b^*\end{cases}.\]
  
  Notice that the probability that a $T$-step weighted lazy random walk on $G'$
  starting in $v \in A$ passes by the vertex $b^*$
  is exactly the same as the probability that a $T$-step lazy random walk on
  $G$ starting in $v$ passes by some vertex in $B$. Therefore, we continue by
  lower bounding the former, and the statement will follow.

  Let us start by analyzing $\phi(G')$. Notice that for all $A' \subseteq
  A$, it follows that
  \begin{align}\label{eq:conductance-1}
    \phi_{G'}(A') = \frac{|\partial_{G'}(A')|}{vol_{G'}(A')} =   \frac{|\partial_G(A')|}{\sum_{v\in A'}2 d_G(v)}\geq \frac{\delta|A'|}{d|A'|}=\frac{\delta}{d}.
  \end{align}

  Let $S = \{b^*\} \cup A'$ for some $A' \subseteq A$ such that $vol(S) \leq
  \frac{vol_{G'}(V')}{2}$, we have that
  \begin{align}\label{eq:conductance-2}
    \phi_{G'}(S) = \frac{w(\partial_{G'}(S))}{ vol_{G'}(S)}
    \geq \frac{w(\partial_{G'}(\overline{S}))}{ vol_{G'}(\overline{S})}=\frac{|\partial_{G}(\overline{S})|}{\sum_{v\in \overline{S}} d_G(v)}
  \geq \frac{|\partial_{G}(\overline{S})|}{ d|\overline{S}|}
    \geq \frac{\delta |\overline{S}|}{ d|\overline{S}|}
    = \frac{\delta}{d},
  \end{align}
  where in the first inequality we use the fact that 
  $\partial_{G'}(S) = \partial_{G'}(\overline{S})$ and $vol_{G'}(S) \leq
  vol_{G'}(\overline{S})$, in the second equality we use the definition of $w$, in the second inequality we use the fact that $\overline{S} \subseteq A$ and every vertex in $A$ has degree at most $d$, 
  and the third inequality follows since $\overline{S} \subseteq A$, 
  and so we can use the assumption of the lemma. 
  From \Cref{eq:conductance-1,eq:conductance-2}, we
  have that $\phi(G') \geq \frac{\delta}{d}$.

We now analyze the stationary distribution of the lazy random walk associated with $G'$. Since $\delta|A|\leq vol_{G'}(\{b^*\}) \leq d|A|$ and for any $v\in A$, $1\leq vol_{G'}(\{v\}) \leq d$, we have that $vol_{G'}(V) \leq (\delta + d)|A| \leq 2d|A|$,
  \[
  \pi_{G'}(b^*)= \frac{vol_{G'}(\{b^*\})}{vol_{G'}(V)} \geq \frac{\delta |A|}{2d|A|} = \frac{\delta}{2d},\] and for any $v \in A$,
  \[\pi_{G'}(v)  = \frac{vol_{G'}(\{v\})}{vol_{G'}(V)} \geq
  \frac{1}{2d|A|} \geq \frac{1}{2d|V|}.\]
  We now apply~\Cref{lemma:mixing-time},
  using
  $\eps=\frac{\delta^2}{2d}$, $\min\{\pi_{G'}(v)\} \geq \frac{1}{2d|V|}$ and  $\phi(G') \geq \frac{\delta}{d}$.
  We have
  that for 
  \[T =
  \frac{8}{\phi(G')^2}\ln\frac{1}{\eps \min\{\pi_{G'}(v)\}} \leq
  \frac{8 \cdot d^2}{\delta^2}\ln\frac{4d^2|V|}{\delta^2}  \leq \stepsrandomwalk,\]

  a $T$-step lazy random walk starting from any vertex
  end in $b^*$
  with probability at least $\frac{\delta- \delta^2}{2d}$,
  which is a lower bound on the probability that the lazy random walk passes by
  $b^*$. Since $\frac{\delta- \delta^2}{2d}\geq\frac{\delta}{4d}$ this finishes the proof. 
\end{proof}

\subsection{Hypercube}
Given some parameter $q$, the $q$-dimensional hypercube is the graph where the vertices are all possible $q$-bit strings and two vertices are connected if the corresponding strings differ on exactly one bit.

We use the following well-known fact about the expansion of the hypercube.
\begin{lemma}[\cite{RV11}]\label{lem:expansion-hypercube}
Let $G$ be the $q$-dimensional hypercube. Then $\phi(G) = \frac{1}{q}$. 
\end{lemma}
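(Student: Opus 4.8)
The plan is to work directly from the definitions of volume and conductance in \Cref{ap:graph}. The first observation is that $G$ is $q$-regular: every $q$-bit string has exactly $q$ neighbors, obtained by flipping each coordinate. With unit edge weights this gives $vol_G(\{v\}) = q$ for every vertex, hence $vol_G(S) = q|S|$ for every $S\subseteq V$ and $\phi_G(S) = \frac{|\partial_G(S)|}{q|S|}$. The volume constraint $vol_G(S)\le vol_G(V)/2$ in the definition of $\phi(G)$ then becomes simply $|S|\le 2^{q-1}$, so
\[
\phi(G) = \min_{\substack{S \subseteq V,\ S\neq\emptyset \\ |S| \le 2^{q-1}}} \frac{|\partial_G(S)|}{q|S|},
\]
and the lemma reduces to the edge-isoperimetric statement that $|\partial_G(S)| \ge |S|$ for every nonempty $S$ with $|S|\le 2^{q-1}$, with equality attained for some such $S$.

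First I would dispatch the upper bound $\phi(G)\le \frac1q$ by exhibiting an optimal cut: take the half-cube $S=\{x\in\{0,1\}^q : x_1=0\}$, of size $2^{q-1}$. Each vertex of $S$ has exactly one neighbor outside $S$ (flip $x_1$), so $|\partial_G(S)|=2^{q-1}=|S|$ and $\phi_G(S)=\frac1q$; as $S$ meets the volume constraint, $\phi(G)\le\frac1q$.

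The core is the matching lower bound, which I would prove by induction on $q$. Writing each vertex as $(x,b)$ with $x\in\{0,1\}^{q-1}$, split $S$ into the layers $S_0=\{x:(x,0)\in S\}$ and $S_1=\{x:(x,1)\in S\}$, viewed inside the $(q-1)$-cube $G'$. Counting boundary edges by direction gives the recursion
\[
|\partial_G(S)| = |\partial_{G'}(S_0)| + |\partial_{G'}(S_1)| + |S_0 \triangle S_1|,
\]
where the first two terms count boundary edges within each layer and the last counts direction-$q$ edges, which cross the cut exactly at $x\in S_0\triangle S_1$. Since $|S|\le 2^{q-1}$, at most one layer can exceed $2^{q-2}$. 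If both layers satisfy $|S_0|,|S_1|\le 2^{q-2}$, the inductive hypothesis gives $|\partial_{G'}(S_j)|\ge|S_j|$, so $|\partial_G(S)|\ge|S_0|+|S_1|=|S|$. Otherwise one layer, say $S_0$, has $|S_0|>2^{q-2}$; then $|S_1|\le 2^{q-1}-|S_0|<2^{q-2}$, so the inductive hypothesis applies to $S_1$ and to the complement $\overline{S_0}$ inside $G'$ (whose size is $<2^{q-2}$). Using $\partial_{G'}(\overline{S_0})=\partial_{G'}(S_0)$ and the elementary bound $|S_0\triangle S_1|\ge |S_0|-|S_1|$, the recursion yields
\[
|\partial_G(S)| \ge (2^{q-1}-|S_0|) + |S_1| + (|S_0|-|S_1|) = 2^{q-1}\ge |S|.
\]
Either way $|\partial_G(S)|\ge|S|$, and with the base case $q=1$ (a single edge) the induction is complete; together with the half-cube cut this gives $\phi(G)=\frac1q$.

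I expect the only genuine obstacle to be this lower bound, and within it the case of a layer that is more than half full: a direct application of the inductive hypothesis to $S_0$ is unavailable there, and the fix is to pass to its complement (whose boundary coincides with that of $S_0$) while absorbing the imbalance between layers through the symmetric-difference term. Everything else — the regularity of $G$, the reduction to the isoperimetric inequality, and the upper-bound cut — is routine.
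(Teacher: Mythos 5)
Your proof is correct. Note, however, that the paper does not prove this lemma at all: it is stated as a ``well-known fact'' and attributed to \cite{RV11}, so there is no in-paper argument to compare against. Your write-up supplies the missing self-contained derivation, and it is the standard one: the reduction from conductance to the edge-isoperimetric inequality $|\partial_G(S)|\ge|S|$ for $|S|\le 2^{q-1}$ is immediate from $q$-regularity (so $vol_G(S)=q|S|$ under the paper's definition of volume), the half-cube gives the matching upper bound, and the layer decomposition with the exact identity $|\partial_G(S)| = |\partial_{G'}(S_0)| + |\partial_{G'}(S_1)| + |S_0 \triangle S_1|$ carries the induction. The one place where care is needed --- a layer more than half full, where the inductive hypothesis cannot be applied directly --- you handle correctly by passing to the complement (which has the same boundary) and absorbing the imbalance via $|S_0\triangle S_1|\ge |S_0|-|S_1|$; the degenerate subcases ($S_1=\emptyset$ or $\overline{S_0}=\emptyset$) are covered because the inequality $|\partial_{G'}(\cdot)|\ge|\cdot|$ holds vacuously for the empty set. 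This is exactly the weak form of Harper's theorem that the lemma needs, and nothing stronger is required.
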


\section{\class{PSPACE}-completeness of \CleanCC{}}
\label{app:pspace}
In this section we give a sketch of proof that the \CleanCC{} problem is \class{PSPACE}-complete. 

\begin{lemma}
\CleanCC{} is in \PSPACE{}.
\end{lemma}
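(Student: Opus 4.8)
The plan is to decide the problem deterministically in polynomial space, using the fact that reachability in the exponentially large graph $G$ can be tested in polynomial space via Savitch's middle-point recursion. The key reformulation is that $(C_G, C_M)$ is a yes-instance if and only if there exists a vertex $x \in \01^n$ such that no marked vertex is reachable from $x$ in $G$. Indeed, if a clean non-empty component exists, picking any vertex $x$ in it makes its whole component (exactly the set of vertices reachable from $x$) unmarked; conversely, if all vertices reachable from some $x$ are unmarked, then the connected component of $x$ is a clean non-empty component (non-emptiness is automatic, since the component contains $x$). So the reformulation captures the yes-condition exactly.

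First I would build a reachability predicate $\textsf{PATH}(u,v)$ deciding whether there is a path from $u$ to $v$ in $G$. Since $|V| = 2^n$, every path has length at most $2^n$, so I would use the standard recursion $\textsf{PATH}_\ell(u,v)$ (``is there a path of length at most $2^\ell$ from $u$ to $v$'') defined by $\textsf{PATH}_\ell(u,v) = \bigvee_{w \in \01^n} \big(\textsf{PATH}_{\ell-1}(u,w) \wedge \textsf{PATH}_{\ell-1}(w,v)\big)$, with base case $\textsf{PATH}_0(u,v)$ testing whether $u = v$ or $v$ is a neighbor of $u$, and evaluate $\textsf{PATH}_n$. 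The recursion has depth $n$ and each stack frame stores a constant number of $n$-bit vertex labels, so the total space is $O(n^2)$; the base case is decided by running $C_G$ on $u$ and scanning its $\poly{n}$ outputs for $v$, costing only polynomial space.

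Next I would wrap this predicate in two nested $n$-bit counters. An outer loop ranges over all candidate starting vertices $x \in \01^n$; for each $x$, an inner loop ranges over all $v \in \01^n$ and checks whether $C_M(v) = 1$ and $\textsf{PATH}(x,v)$ both hold. If some marked $v$ is reachable from $x$, the component of $x$ is not clean and the machine advances $x$; if the inner loop finishes without finding such a $v$, the machine accepts. If the outer loop exhausts all $x$ without accepting, it rejects. Correctness is immediate from the reformulation above, and the space used is $O(n)$ for the two counters, plus $O(n^2)$ for the Savitch predicate, plus the polynomial space needed to evaluate $C_G$ and $C_M$, hence polynomial overall.

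The main obstacle is exactly the reachability test in a graph of exponential size, which Savitch's recursion resolves in polynomial space; once reachability is available as a polynomial-space subroutine, the remaining enumeration is routine. I would remark that no nondeterminism is actually needed here, although one could equivalently guess $x$ nondeterministically and invoke $\PSPACE = \class{NPSPACE}$.
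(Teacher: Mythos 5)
Your proof is correct, but it takes a genuinely different route from the paper's. Both arguments share the same outer structure: reformulate the yes-condition as ``there exists $x \in \01^n$ such that no marked vertex is reachable from $x$,'' and then enumerate all candidate starting vertices $x$ in polynomial space. Where you differ is in how reachability in the exponentially large graph is decided. You use Savitch's middle-point recursion to get a deterministic polynomial-space predicate $\textsf{PATH}(u,v)$ of depth $n$ with $O(n^2)$ total bookkeeping space, and then explicitly scan over all potentially marked targets $v$. The paper instead isolates the subproblem \FixedCleanCC{} (is the component of a \emph{given} $x$ clean?), places it in \PPSPACE{} via a $2^n$-step random walk from $x$ that rejects upon hitting a marked vertex, and then invokes the collapse $\PPSPACE{}=\PSPACE{}$ to derandomize. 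Your argument is more self-contained and elementary --- it needs no randomized space classes and no appeal to the Ladner-style collapse --- and it makes the space accounting completely explicit. The paper's version buys thematic continuity: the random-walk verifier for \FixedCleanCC{} is the exponential-time analogue of the polynomial-time lazy-random-walk verifier used for the \MA{} containment of \ACleanCC{}, so the two containment proofs mirror each other. Both are valid; yours is arguably the more standard textbook route to the \PSPACE{} upper bound.
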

\begin{proof}[Proof sketch]
In order to prove containment, we can follow two steps
\begin{enumerate}
    \item A variant of \CleanCC{}, called \FixedCleanCC{}, is in \PSPACE{}.
    \item The containment of \FixedCleanCC{} in \PSPACE{} implies the containment of \CleanCC{} in \PSPACE{}.
\end{enumerate}

The problem \FixedCleanCC{} consists of deciding if the connected component of a fixed node contains only unmarked elements.
More concretely, it corresponds to the language  \[\{(x,C_G,C_M) : \text{the connected component of } x \text{ in the graph } G \text{ contains only unmarked elements} \},\] where $C_G$ and $C_M$ are defined as in \CleanCC{}. 
We argue that \FixedCleanCC{} is in \PPSPACE{},\footnote{\PPSPACE{} is the complexity class of randomized algorithms that run in polynomial space and non-zero completeness/soundness gap} and since \PPSPACE{}=\PSPACE{} \cite{Ladner89}, this proves the first claim. The \PPSPACE{} algorithm consists of a $2^{n}$-step random walk on $G$ starting from $x$ which rejects iff a marked element is encountered. If the connected component of $x$ has no marked elements, so the algorithm never rejects, whereas for no-instances, the random walk finds the marked element with non-zero probability.

 We can devise a polynomial-space algorithm for $\CleanCC$ that iterates over all $x \in \01^n$ and accept iff for at least one of such $x$'s, $(x,C_G,C_M)$ is a yes-instance of $\FixedCleanCC$ (which can be decided in $\PSPACE$ as described before).
 \end{proof}

\begin{lemma}
\CleanCC{} is \PSPACE{}-hard.
\end{lemma}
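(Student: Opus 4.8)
The plan is to reduce from a canonical $\PSPACE$-complete problem, namely acceptance of a fixed deterministic polynomial-space Turing machine $M$ on input $x$. First I would massage $M$ into a convenient normal form: using the fact that reversible and deterministic polynomial space coincide (Lange--McKenzie--Tapp), I may assume $M$ is \emph{reversible}, equipped with an explicit clock as part of its configuration that strictly increases at every step and halts the machine after exactly $T = 2^{\poly{|x|}}$ steps, with a unique accepting configuration. The crucial consequence of reversibility together with the monotone clock is that the transition map $\mathrm{next}(\cdot)$ on valid configurations is \emph{injective} and never revisits a configuration; hence the directed configuration graph, and therefore its undirected version, is a disjoint union of \emph{simple paths}, each running from a unique clock-$0$ configuration to a unique clock-$T$ configuration.

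Next I would build the $\CleanCC$ instance $(C_G, C_M)$. The vertex set is $\01^N$ with $N = \poly{|x|}$, where each vertex encodes a configuration (clock together with tape, head, and state); padding strings and strings that do not encode a valid configuration are simply isolated. The circuit $C_G$, on input a configuration $c$, outputs its (at most two) neighbors $\mathrm{next}(c)$ and $\mathrm{next}^{-1}(c)$, both computable by polynomial-size circuits precisely because $M$ is reversible; thus every vertex has degree at most $2$. The marking circuit $C_M$ marks a vertex $c$ iff (i) $c$ is not a valid configuration, or (ii) $c$ is a clock-$0$ configuration different from the true initial configuration $c_0$ of $M$ on $x$, or (iii) $c$ is a clock-$T$ configuration that is not accepting. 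Each of these is a polynomial-size predicate.

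I would then verify the two directions. Each connected component of $G$ is one of the simple paths described above, together with the isolated marked junk vertices. Because of conditions (i) and (ii), the only component that can possibly be clean is the path starting at $c_0$: every other path either starts at a marked clock-$0$ configuration or consists of marked invalid vertices. The path from $c_0$ is exactly the genuine computation of $M$ on $x$, and by condition (iii) its clock-$T$ endpoint is unmarked iff that endpoint is accepting, i.e.\ iff $M$ accepts $x$. Hence if $M$ accepts there is a nonempty clean component, a \textbf{Yes}-instance of $\CleanCC$; and if $M$ rejects, the $c_0$-path contains a marked endpoint while every other component already contains a marked vertex, so every component is marked, a \textbf{No}-instance. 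The same construction, anchored at an explicitly given source vertex, also yields $\PSPACE$-hardness of $\FixedCleanCC$.

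The step I expect to be the main obstacle is matching the \emph{existential} flavour of $\CleanCC$ (``does some component avoid all marked vertices?'') to the single, fixed computation of $M$ on $x$: a priori an accepting computation started from an \emph{arbitrary} configuration would spuriously create a clean component. This is exactly what reversibility and the marking of all non-$c_0$ clock-$0$ configurations are designed to prevent. Reversibility guarantees that no spurious configuration can merge into $c_0$'s path (which would otherwise inject a marked clock-$0$ vertex and spoil completeness), while condition (ii) guarantees that any path started elsewhere is itself marked; together they pin the unique candidate clean component to the genuine computation from $c_0$. The remaining details---the normal form for $M$, the encoding of configurations as bit strings, and the check that all predicates have polynomial size---are routine.
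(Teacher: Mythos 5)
Your proposal is correct and follows essentially the same route as the paper's proof: reduce from a reversible polynomial-space machine (via Lange--McKenzie--Tapp), encode clocked configurations as vertices with $C_G$ giving the predecessor and successor configurations, and mark the non-initial clock-$0$ configurations and the non-accepting final configurations so that the only candidate clean component is the genuine computation path from $c_0$. The extra care you take with invalid encodings and with ruling out spurious clean components is exactly the role reversibility plays in the paper's argument as well.
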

\begin{proof}[Proof sketch.]
Let $L$ be a language in \class{PSPACE}. Since we can assume that any problem in \class{PSPACE} can also be solved in polynomial space by a {\em reversible} Turing machine~\cite{LangeMT00}, we fix the reversible Turing Machine $M$ that decides if $x$ belongs to $L$ in $p(|x|)$-space, for some polynomial $p$. A {\em configuration} of $M$ consists of its internal state, the content of its tape and the position of the tape-head. For some configuration $w$, we denote $pred(w)$ as the preceding configuration of $w$ and $succ(w)$ as its following configuration (and both can be efficiently computed from $w$ since $M$ is reversible).

In the reduction to \CleanCC{}, we set $n = p(|x|) + S$, where $S = O(p(|x|))$ is the amount of memory to store a snapshot of the computation of $M$ (meaning the current state of the Turing machine, the tape and the position of the head). Each string $y \in \01^n$ can be split in two parts: the first one is the ``clock''-substring and the second one is the 
``working''-substring. The clock substring represents, in binary, a value in $t \in [2^{p(|x|)}]$ (notice that $M$ runs in time at most $2^{p(n)}$) and the working substring represents an arbitrary state of the Turing Machine at time $t$. 

We define then instance $(C_G,C_M)$ of \CleanCC{} as follows:
\begin{itemize}
    \item On $t \concat w \in \01^n$, $C_G$ outputs the string $t-1 \concat pred(w)$ and $t+1 \concat succ(w)$ (if $t = 0$ or $t = 2^n$, $C_G$ only outputs the meaningful neighbors).
    \item On $t \concat w \in \01^n$, $C_M$ outputs $1$ if $t = 0$ and $w$ does not consist of the state of $M$ at the beginning of the computation or if $t = 2^{p(|x|)}$ and $w$ does not consist of the state of $M$ that would lead to acceptance.  
\end{itemize}

It is not hard to see that if $x \in L$, then the connected component of $0 \concat x $ is clean, whereas if $x \not\in L$, all connected components have at least one marked element.
\end{proof}
\end{document}